\documentclass[a4paper]{article}

\usepackage[english]{babel}
\usepackage{fullpage}
\usepackage{amsmath}
\usepackage[colorlinks=true, allcolors=blue]{hyperref}
\usepackage{amssymb,amsthm}
\usepackage{bbm}
\usepackage[capitalize]{cleveref}
\usepackage{tikz}
\usepackage{authblk}
\usepackage{mathtools}
\usepackage{orcidlink}

\usepackage[sorting=none, backend=biber, url=false, isbn=false, hyperref=true, eprint=true, maxbibnames=6]{biblatex}
\AtEveryBibitem{%
  \clearfield{eprintclass}%
}

\newbibmacro{string+doi}[1]{%
  \iffieldundef{doi}{#1}{\href{http://dx.doi.org/\thefield{doi}}{#1}}}
\DeclareFieldFormat{title}{\usebibmacro{string+doi}{\mkbibemph{#1}}}
\DeclareFieldFormat[article]{title}{\usebibmacro{string+doi}{\mkbibquote{#1}}}

\newcommand{\ket}[1]{\vert #1 \rangle}
\newcommand{\bra}[1]{\langle #1 \vert}
\newcommand{\scalprod}[2]{\langle #1 \vert #2 \rangle}
\newcommand{\tr}{\operatorname{Tr}}
\newcommand{\id}{\mathbbm{1}}
\newcommand{\Span}{\mathrm{Span}}
\newcommand{\SO}{\mathrm{SO}}

\newcommand{\ad}{\mathrm{ad}}

\newcommand{\su}{\mathfrak{su}}
\newcommand{\so}{\mathfrak{so}}
\newcommand{\End}{\mathrm{End}}

\newtheorem{theorem}{Theorem}

\newtheorem{proposition}[theorem]{Proposition}

\newtheorem{definition}[theorem]{Definition}
\theoremstyle{definition}
\newtheorem{remark}[theorem]{Remark}

\tikzset{
  tensor/.style={
    inner sep = 0.055cm,
    shape = circle,
    draw,
    fill
  }
}

\title{Spin models with critical ground space degeneracy from Lie algebra relations}
\author[1]{András Molnár\,\orcidlink{0000-0001-5144-438X}\,}
\author[3]{Efekan Kökcü\,\orcidlink{0000-0002-7323-7274}\,}
\author[1,2]{Norbert Schuch\,\orcidlink{0000-0001-6494-8616}\,}
\author[4]{Bojko N. Bakalov\,\orcidlink{0000-0003-4630-6120}\,}
\affil[1]{\mbox{\normalsize University of Vienna, Faculty of Mathematics, Oskar-Morgenstern-Platz 1, 1090 Vienna, Austria}}
\affil[2]{\mbox{\normalsize University of Vienna, Faculty of Physics, Boltzmanngasse 5, 1090 Vienna, Austria}}
\affil[3]{\mbox{\normalsize Department of Electrical and Computer Engineering, University of Central Florida, Orlando, FL 32816, USA}}
\affil[4]{\mbox{\normalsize Department of Mathematics, North Carolina State University, Raleigh, NC 27695, USA}}

\begin{document}
\maketitle

\begin{abstract}
We introduce an $\mathrm{SO}(3)$-symmetric spin model derived from
the algebraic structure of $\mathfrak{so}(3)$: It is obtained as the
nearest-neighbor parent Hamiltonian of a Matrix Product State (MPS) built from
the generators of the Lie algebra $\mathfrak{so}(3)$ and the identity matrix, and
therefore encodes the quadratic relations of the Lie algebra. We characterize
the ground space structure of the resulting model and show that for open
boundary conditions (OBC), it exhibits a quadratic ground space degeneracy, given by one 
irreducible representation (irrep) of each odd dimension (integer spin). For periodic
boundary conditions (PBC), it has a linear ground space degeneracy,
consisting of a singlet --- namely, the MPS underlying the model --- and a
ferromagnet (the irrep with maximal spin), and thus exhibits gapless
excitations.  We also generalize the construction and the analysis of the OBC ground space 
to $\SO(d)$ and other compact simple Lie groups.

The MPS on which the model is based is an injective MPS, and therefore, its
3-site parent Hamiltonian has a unique ground space and is gapped. The observed
critical behavior thus has its origin in the small parent Hamiltonian considered. The
model's algebraic ground space scaling thus provides an example distinct from
that reported in (Schuch \emph{et al.}, 2025 \cite{schuch2025}), where it
was shown that small parent Hamiltonians of injective MPS generically have unique ground
states, and examples with an exponential ground space degeneracy were given.
\end{abstract}

\section{Introduction}\label{sec:introduction}

Matrix Product States  (MPS) provide a way to concisely describe wave functions in extended quantum many-body systems \cite{Cirac2021}. They are especially well suited for approximating ground and Gibbs states of local, gapped Hamiltonians \cite{Hastings2006a,Molnar_2015}. They have also been used in several other contexts, such as machine learning \cite{Sengupta2022}, simulation of quantum algorithms and time evolution \cite{Paeckel2019}, quantum chemistry \cite{Garcia2024}, etc.

MPS arise as special cases of Finitely Correlated States \cite{Fannes1992}, a family of well-behaved states defined directly in the thermodynamic limit, and one can readily reformulate the Density Matrix Renormalization Group algorithm \cite{White1992} using MPS. These two facts are the underlying reason for their usefulness: both as the basis for highly efficient numerical algorithms and as a tool to create and investigate Hamiltonians with known ground-space and spectral properties; in particular, exactly solvable models. 

The main idea behind the use of MPS as a tool to create exactly solvable models is the generalization of the AKLT model \cite{Affleck1988}, where a Hamiltonian is devised whose ground state is a known, already fixed state. A similar construction works in the MPS language: Given an MPS, one can create a Hamiltonian, called the parent Hamiltonian \cite{Fannes1992}, such that the MPS is in its ground space. Under certain conditions\footnote{For example, the MPS tensor is injective and the range of the Hamiltonian terms is larger than the injectivity length.} one can prove that this Hamiltonian has a unique ground state and a gap separating its ground state energy from the rest of the spectrum \cite{Fannes1992,Nachtergaele1996} (with a non-zero lower bound independent of the system size, i.e., the infinite Hamiltonian also has a gap). 

Symmetries of an MPS (and, consequently, of its parent Hamiltonian) can be guaranteed by enforcing the symmetries of its local building blocks \cite{Fannes1992,Perez-Garcia2007}. This fact allows one to characterize, for example, symmetry-protected topological order in one-dimensional systems \cite{1010.3732,Chen2011a}. Understanding how MPS describe symmetric states also speeds up numerical algorithms calculating ground state properties of Hamiltonians possessing symmetries, which is a common trait of realistic Hamiltonians.  

In this paper, we devise an MPS and a corresponding parent Hamiltonian which are
derived from the Lie algebra structure of the symmetry group $\mathrm{SO}(3)$.
Specifically, the MPS is constructed from the generators of the underlying Lie algebra, 
together with the identity matrix, where the Hamiltonian we consider is its nearest
neighbor parent Hamiltonian; as such, the Hamiltonian, and thus its ground
space, precisely encodes the quadratic relations of the Lie algebra.  The model
possesses a number of interesting properties: First, it exhibits an
$\mathrm{SO}(3)$ symmetry under the adjoint action, i.e., the spin $0\oplus 1$
representation. By construction, the ground space contains the MPS from which
the parent Hamiltonian has been derived. However, as we prove, the ground space
of the model is significantly larger: For OBC, it displays a quadratically
growing degeneracy, which decomposes into of one copy of each integer spin
representation of $\mathrm{SO}(3)$. For PBC, its ground space degeneracy grows
linearly, and the ground space consists of the defining MPS, together with the
maximum spin representation. The latter corresponds to a ferromagnetic ground
space, and therefore, the model exhibits gapless spin wave excitations. 

Importantly, the MPS from which the model is derived exhibits a property
known as injectivity after blocking two sites, which implies that its parent
Hamiltonian constructed on three consecutive sites (i.e., one above the
injectivity length) possesses a unique ground state for PBC (and a constant
degeneracy for OBC)~\cite{Perez-Garcia2007}.  The observed critical behavior
of our model
thus arises from the fact that the Hamiltonian we consider is a small parent
Hamiltonian, that is, it is constructed on two sites (i.e., the injectivity length) only. Additionally, it was
recently shown that for  generically chosen injective MPS, such a small parent
Hamiltonian still has a unique ground state, and deviating cases were devised only
for exponentially growing ground space degeneracy~\cite{schuch2025}. The model
introduced in this work thus provides an example of yet another type of behavior, where
the ground space degeneracy of the model exhibits a critical (that is,
algebraic) scaling. While  such a quadratic ground-space dimension scaling can
also be observed in the Motzkin chain in the absence of boundary
terms~\cite{Movassagh2014}, our model differs in that it possesses an underlying
symmetry group, which is directly tied to the observed degeneracy, and in that
the model is obtained as the parent Hamiltonian of an MPS.

We generalize this construction to simple compact Lie groups, resulting in a series of models with various ground-space degeneracies. The proof of the ground-space degeneracy formula builds on Vogel's universal parameters characterizing the eigenvalues of the adjoint action of the split Casimir element on the symmetric square of a simple Lie algebra \cite{vogel1999} (see also \cite{isaev2026vogel}).

Finally, let us note that the open boundary condition ground spaces of frustration-free Hamiltonians, such as the parent Hamiltonians of MPS, naturally form a subproduct system \cite{ShalitSolel2009}. The subproduct systems we obtain are Lie-algebra equivariant by construction, and here we analyze how this equivariance governs the growth of the dimensions.

The paper is organized as follows. First, in \Cref{sec:background}, we introduce the background necessary to understand the tensor network formalism and state the exact theorems known about parent Hamiltonians of MPS. In \Cref{sec:so_3}, we introduce the $\SO(3)$-symmetric MPS that we investigate and analyze its ground-space structure on open boundary conditions. In \Cref{sec:general}, we show how to generalize the construction to arbitrary Lie algebras. In \Cref{sec:pbc}, we analyze the ground-space degeneracy of the $\SO(3)$ model for periodic boundary conditions. \Cref{sec:physical} analyzes the physical structure of the model. Finally, \Cref{sec:conclusion} provides concluding remarks and outlines directions for future work.

\section{Background on MPS}\label{sec:background}

In this section, we briefly introduce MPS and their parent Hamiltonians. For details, see, for example, \cite{Cirac2021}. In the following, $\mathcal{M}_D$ denotes the algebra of $D\times D$ complex matrices. 

\begin{definition}[Matrix Product States]
    Let $\mathcal{H}$ be a finite-dimensional Hilbert space, and let us fix a basis $\{\ket{i}\}_{i=1}^d$ for $\mathcal{H}$. A tensor $A\in\mathcal{M}_D \otimes \mathcal{H}$ of the form 
    \begin{equation*}
    A = \sum_{i=1}^d A_i \otimes \ket{i}
    \end{equation*}
    is called an MPS tensor. The \emph{MPS generated by $A$ with boundary $X$} on $n\in\mathbb N$ particles is a state 
    $\mathfrak{M}_n(A,X)\in \mathcal{H}^{\otimes n}$ defined as 
    \begin{equation*}
        \mathfrak{M}_n(A,X) = \sum_{i\in \{1, \dots, d\}^n} \tr\left( X A_{i_1} A_{i_2} \cdots A_{i_n}\right) \ket{i_1 i_2 \cdots i_n}.
    \end{equation*}
    We write $\mathfrak{M}_n(A)$ for $\mathfrak{M}_n(A,\id)$ and call it the \emph{MPS generated by $A$} on $n$ particles. The \emph{subspace generated by $A$} on $n$ particles is the space
    \begin{equation*}
        \mathfrak{S}_n(A) = \left\{ \mathfrak{M}_n(A,X) \,\middle| \, X\in\mathcal{M}_D\right\} \subseteq \mathcal{H}^{\otimes n}.
    \end{equation*}
\end{definition}

One can use a graphical language to represent tensors and tensor contraction. In this language, tensors are represented by various shapes with lines joined to them; the number of lines corresponds to the number of indices in the tensor. For example, the MPS tensor $A\in\mathcal{M}_D \otimes \mathcal{H}$, $A = \sum_i A_i \otimes \ket{i}$ is a three-index tensor and is represented as 
\begin{equation}
    A = 
    \begin{tikzpicture}[baseline=-1mm, scale = 0.5]
        \draw (-1, 0) -- (1,0);
        \draw (0,0) --++ (0,1);
        \node[tensor] at (0,0) {};    
    \end{tikzpicture}\ ,
\end{equation}
where the two horizontal lines correspond to the indices of the matrices $A_i$ (left is the index corresponding to the output, right is to the input of the matrix), while the vertical line corresponds to the physical space (the vector $\ket{i}$).
Index contraction is denoted by joining lines; with this notation, the contraction of two MPS tensors is depicted as
\begin{equation}
    \sum_{i,j=1}^d A_i A_j \otimes \ket{ij} = \,
    \begin{tikzpicture}[baseline=-1mm, scale = 0.5]
        \draw (-1, 0) -- (2,0);
        \draw (0,0) --++ (0,1);
        \draw (1,0) --++ (0,1);
        \node[tensor] at (0,0) {};    
        \node[tensor] at (1,0) {};    
    \end{tikzpicture}\ ,
\end{equation}
and similarly, the contraction of $n$ MPS tensors multiplied by a matrix $X$ is represented as 
\begin{equation}
    \sum_{i\in \{1, \dots, d\}^n} X A_{i_1} A_{i_2} \cdots A_{i_n} \otimes \ket{i_1 i_2 \cdots i_n} = \,
    \begin{tikzpicture}[baseline=-1mm, xscale = 0.7, yscale = 0.5, font=\footnotesize]
        \draw (-2, 0) -- (4,0);
        \node[tensor, label=below:$X$] at (-1,0) {};            
        \foreach \x in {0,1, 3}{
            \draw (\x,0) --++ (0,1);
            \node[tensor] at (\x,0) {};            
        }
        \node[fill=white] at (2,0) {$\dots$};
    \end{tikzpicture}\ .
\end{equation}
Taking the trace of the first tensor component is just another index contraction, and thus the MPS itself can be represented as
\begin{equation}
    \mathfrak{M}_n(A,X) = \sum_{i\in \{1, \dots, d\}^n} \tr(X A_{i_1} A_{i_2} \cdots A_{i_n})  \ket{i_1 i_2 \cdots i_n} = \;
    \begin{tikzpicture}[baseline=-1mm, xscale = 0.7, yscale = 0.5, font=\footnotesize]
        \draw (-2, 0) rectangle (4,-1);
        \node[tensor, label=below:$X$] at (-1,0) {};            
        \foreach \x in {0,1, 3}{
            \draw (\x,0) --++ (0,1);
            \node[tensor] at (\x,0) {};            
        }
        \node[fill=white] at (2,0) {$\dots$};
    \end{tikzpicture}\ .
\end{equation}

Given an MPS, one can define a Hamiltonian such that the MPS is in the ground space of the Hamiltonian. We will see later that, under certain conditions, the MPS is actually the unique ground state of this Hamiltonian.

\begin{definition}[Parent Hamiltonian]
    Let $A\in\mathcal{M}_D\otimes\mathcal{H}$ be an MPS tensor such that $\mathfrak{S}_k(A)\subsetneq \mathcal{H}^{\otimes k}$. Let $h$ be the orthogonal projector onto $\mathfrak{S}_k(A)^\perp$. The \emph{$k$-body open boundary condition parent Hamiltonian of $A$} is an operator $H \in \mathcal{B}(\mathcal{H}^{\otimes n})$ defined as
    \begin{equation}\label{parentH}
        H = \sum_{i=0}^{n-k} \id^{\otimes i} \otimes h \otimes \id^{\otimes (n-k-i)}.
    \end{equation}
\end{definition}
The $k$-body Hamiltonian term $h$ by definition satisfies $h\geq 0$ and $h\cdot  \mathfrak{M}_k(A,X) = 0$ for all $X\in\mathcal{M}_D$, or equivalently,
\begin{equation*}
    \sum_{i\in \{1, \dots, d\}^k} A_{i_1} A_{i_2} \cdots A_{i_k} \otimes h\ket{i_1i_2 \cdots i_k} = 0.
\end{equation*}
Graphically, this equation is represented as 
\begin{equation}
    \begin{tikzpicture}[baseline=-1mm, xscale = 0.7, yscale = 0.5, font=\footnotesize]
        \draw (-1, 0) -- (4,0);
        \foreach \x in {0,1, 3}{
            \draw (\x,0) --++ (0,1.8);
            \node[tensor] at (\x,0) {};            
        }
        \draw[fill=white] (-0.5,0.7) rectangle (3.5,1.3);
        \node at (1.5, 1.0) {$h$};
        \node[fill=white] at (2,0) {$\dots$};
    \end{tikzpicture}\  = 0.
\end{equation}

The parent Hamiltonian $H$ is a sum of orthogonal projectors, and thus $H\geq 0$. Let $G_n$ be the zero-energy eigenspace of $H$. By construction, $\mathfrak{S}_n(A)\subseteq G_n$, so, as long as $\mathfrak{S}_n(A)\neq 0$, $G_n$ is non-zero and thus the lowest eigenvalue of $H$ is $0$, i.e., $H$ is \emph{frustration free}. In this case, $G_n$ is the \emph{ground space} of $H$. In fact, the ground space can be constructed as follows:
\begin{proposition}
    Assume $A$ is an MPS tensor such that $\mathfrak{S}_n(A)\neq 0$. Then the ground space of the k-body parent Hamiltonian of $A$ on $n$ particles is
    \begin{equation*}
        G_n = \bigcap_{i=0}^{n-k} \mathcal{H}^{\otimes i} \otimes \mathfrak{S}_k(A) \otimes \mathcal{H}^{\otimes (n-k-i)}.
    \end{equation*}
\end{proposition}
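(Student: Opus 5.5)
The plan is to use the standard frustration-freeness argument: the parent Hamiltonian $H$ is a sum of positive semidefinite terms, and the MPS $\mathfrak{M}_n(A)$ (or more generally $\mathfrak{S}_n(A)$) is a nonzero vector annihilated by every term. First, write $h_i := \id^{\otimes i}\otimes h\otimes \id^{\otimes(n-k-i)}$, so that $H=\sum_{i=0}^{n-k} h_i$ with each $h_i$ an orthogonal projector, hence $h_i\geq 0$. The kernel of $h_i$ is exactly $\mathcal{H}^{\otimes i}\otimes \ker h\otimes \mathcal{H}^{\otimes(n-k-i)}$. Since $h$ is the projector onto $\mathfrak{S}_k(A)^\perp$, we have $\ker h=\mathfrak{S}_k(A)$. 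This identification is immediate because the kernel of a tensor product $\id\otimes h\otimes\id$ is the tensor product of the kernels with full spaces, as seen by diagonalizing $h$.

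Second, I will show that $\ker H=\bigcap_i \ker h_i$. The inclusion $\supseteq$ is trivial. For $\subseteq$, take $\psi$ with $H\psi=0$. Then
\[
0=\langle \psi, H\psi\rangle=\sum_i \langle\psi,h_i\psi\rangle=\sum_i \|h_i\psi\|^2 ,
\]
where the last equality uses that each $h_i$ is a projector. Every summand is nonnegative, so each $h_i\psi=0$. Together with the first step, this gives $\ker H=\bigcap_i \mathcal{H}^{\otimes i}\otimes\mathfrak{S}_k(A)\otimes\mathcal{H}^{\otimes(n-k-i)}$.

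Third, I will show that this kernel is the ground space, meaning that the minimum eigenvalue of $H$ is $0$. Since $H\geq0$, it suffices to show that $\ker H\neq 0$. For this I use $\mathfrak{S}_n(A)\subseteq\ker H$, which is the observation already made in the text. For any $X$, the state $\mathfrak{M}_n(A,X)$ restricted to sites $i+1,\dots,i+k$ has the form $\sum \tr(X A_{\text{left}} A_{j_1}\cdots A_{j_k}A_{\text{right}})$. Rewriting this as $\tr(Y A_{j_1}\cdots A_{j_k})$ with $Y = A_{\text{right}}XA_{\text{left}}$ (cyclicity of the trace) shows that, for fixed outer indices, the state lies in $\mathfrak{S}_k(A)$ on those sites. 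Hence $h_i\mathfrak{M}_n(A,X)=0$. By hypothesis $\mathfrak{S}_n(A)\neq0$, so $\ker H\neq 0$, and therefore $G_n=\ker H$.

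There is no real obstacle in this argument. The only point requiring care is the last step: correctly absorbing the outer matrices into the boundary matrix $Y$ by cyclicity of the trace, which places the local reduced tensor in $\mathfrak{S}_k(A)$.
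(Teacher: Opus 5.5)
Your proposal is correct and follows essentially the same route as the paper: because each term is a positive semidefinite projector, $\ker H=\bigcap_i \ker h_i = G_n$, and the inclusion $\mathfrak{S}_n(A)\subseteq G_n$ together with $\mathfrak{S}_n(A)\neq 0$ shows that $0$ is the lowest eigenvalue. The only difference is that you spell out the inclusion $\mathfrak{S}_n(A)\subseteq G_n$ via cyclicity of the trace, absorbing the outer matrices into the boundary $Y=A_{\text{right}}XA_{\text{left}}$, whereas the paper simply asserts that it holds by construction.
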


\begin{proof}
    We claim that $G_n$ is the zero energy eigenspace of $H$. Indeed, $\bra{v}H\ket{v} = 0$ iff $\bra{v}\id^{\otimes i} \otimes h \otimes \id^{\otimes (n-k-i)} \ket{v}=0$ for all $i=0,\dots, n-k$, which is equivalent to $\ket{v}\in G_n$. As $H\geq 0$, each of its eigenvalues is non-negative, and thus $G_n$ is the ground space of $H$ as long as it is nonzero, $G_n\neq 0$. But by construction, $\mathfrak{S}_n \subseteq \bigcap_{i=0}^{n-k} \mathcal{H}^{\otimes i} \otimes \mathfrak{S}_k \otimes \mathcal{H}^{\otimes (n-k-i)} = G_n$, and hence $G_n \neq 0$. 
\end{proof}

As mentioned in \Cref{sec:introduction}, the parent Hamiltonians of certain MPS have unique ground spaces. These are injective\footnote{In some of the literature, these MPS are called normal, and ``injective'' refers to normal MPS with injectivity length 1.} MPS:
\begin{definition}[Injective MPS]
    Let $\mathcal{H}$ be a finite-dimensional Hilbert space, and $D\in \mathbb{N}$. An MPS tensor $A\in \mathcal{M}_D \otimes \mathcal{H}$ is called injective if the map $X\mapsto \mathfrak{M}_k(A,X)$ is injective for some $k\in\mathbb{N}$. The minimal such $k$ is called the injectivity length of $A$.
\end{definition}

It is easy to see that if the map $X\mapsto \mathfrak{M}_k(A,X)$ is injective for some $k\in\mathbb{N}$, then it is also injective for any $l>k$. The map $X\mapsto \mathfrak{M}_k(A,X)$ is injective if and only if $\Span\{A_{i_1} A_{i_2} \cdots A_{i_k}| i\in \{1,\dots, d\}^k\} = \mathcal{M}_D$.

Notice that for injective MPS, $\mathfrak{S}_n(A) \neq 0$, for any $n\in\mathbb{N}$. The precise statement characterizing the ground space of the parent Hamiltonian of such an MPS is the following:
\begin{theorem}\label{thm:parent_hamiltonian_uniqe_gs}
    Let $\mathcal{H}$ be a finite-dimensional Hilbert space, $D\in \mathbb{N}$, $A\in\mathcal{M}_D \otimes \mathcal{H}$ be an injective MPS tensor with injectivity length $k$, and $k<l\in \mathbb{N}$. Then the $l$-local parent Hamiltonian of $A$ exists and its ground space on $n\geq l$ particles is $\mathfrak{S}_n(A)$.
\end{theorem}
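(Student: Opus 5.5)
The proof first settles existence, then proves the intersection property $G_{m+1}=\mathfrak{S}_{m+1}(A)$ by induction on $m$.

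\textbf{Existence.} For $n\ge k$ the map $X\mapsto\mathfrak{M}_n(A,X)$ is injective, so $\dim\mathfrak{S}_n(A)=D^2$. We need $\mathfrak{S}_l(A)\subsetneq\mathcal{H}^{\otimes l}$. If $d^l=D^2$ for some $l>k$, then $d^k<D^2$ (as $d\ge 2$ is forced unless $D=1$; the degenerate cases $d=1$ or $D=1$, $d=1$ are handled separately). This contradicts injectivity at length $k$, which needs $d^k\ge D^2$. So $\dim\mathcal{H}^{\otimes l}>D^2$ and the parent Hamiltonian exists. By the proposition above, its ground space is $G_n=\bigcap_i \mathcal{H}^{\otimes i}\otimes\mathfrak{S}_l\otimes\mathcal{H}^{\otimes(n-l-i)}$, and $\mathfrak{S}_n\subseteq G_n$ is automatic.

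\textbf{Key lemma.} For $a\ge 1$, $c\ge 1$ and $b\ge k$,
\[
(\mathfrak{S}_{a+b}(A)\otimes\mathcal{H}^{\otimes c})\cap(\mathcal{H}^{\otimes a}\otimes\mathfrak{S}_{b+c}(A))=\mathfrak{S}_{a+b+c}(A).
\]
Applying it inductively with $b=l-1\ge k$ and $c=1$ gives $G_{m+1}=(G_m\otimes\mathcal{H})\cap(\mathcal{H}^{\otimes(m+1-l)}\otimes\mathfrak{S}_l)=\mathfrak{S}_{m+1}$ for all $m\ge l$, starting from $G_l=\mathfrak{S}_l$. This is exactly where $l>k$ enters: the overlap region must be at least the injectivity length.

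\textbf{Proof of the lemma.} Write a vector of the left space as $\sum \tr(X A_{i}\cdots)\,\ket{i}\otimes\ket{\phi}$, i.e.\ as $\sum_{\alpha\beta}(\text{$(a+b)$-site open MPS})_{\alpha\beta}\otimes\ket{\psi_{\beta\alpha}}$ with arbitrary $\ket{\psi_{\beta\alpha}}\in\mathcal{H}^{\otimes c}$. Similarly, write a vector of the right space as $\sum \ket{\chi_{\beta\alpha}}\otimes(\text{$(b+c)$-site open MPS})_{\alpha\beta}$. Using the standard fact that the open-index tensor $\sum A_{i_1}\cdots A_{i_b}\otimes\ket{i_1\cdots i_b}$, viewed as a map $\mathcal{M}_D^*\to\mathcal{H}^{\otimes b}$, is injective for $b\ge k$, it has a left inverse, i.e.\ one can apply a dual functional to the middle $b$ sites. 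Applying this left inverse to the middle block of both expressions and equating the results gives
\[
\sum_\beta (A_{i_1}\cdots A_{i_a})_{\alpha\beta}\,\psi_{\gamma\alpha}=\sum_\delta \chi_{\beta\delta}\,(A_{j_1}\cdots A_{j_c})_{\delta\gamma}
\]
as tensors in the outer sites. Denote by $Y$ the matrix-valued vector on the right end. Since $a\ge1$, the products $A_{i_1}\cdots A_{i_a}$ span a space on which we can use injectivity of the right block when $a+b\ge k$. Taking the left partner instead shows $Y=\sum_{j}Z A_{j_1}\cdots A_{j_c}\,\ket{j}$ for a fixed matrix $Z$, i.e.\ the right tail is itself MPS-like. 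Substituting back shows the vector is $\mathfrak{M}_{a+b+c}(A,Z)$.

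\textbf{Main obstacle.} The delicate point is this last step: extracting a single boundary matrix $Z$ from the equality. I would carry it out cleanly in index notation by choosing, via injectivity of the $b$-block, matrices $\mathcal{E}$ on the middle sites with $\sum_{i}\mathcal{E}_{i}\,A_{i_1}\cdots A_{i_b}=E_{\alpha\beta}$ (the matrix units). This reduces both sides to comparing $(\text{left part})\,E_{\alpha\beta}\,(\text{right part})$ and reading off factorization entrywise. The ordering conventions in the trace need care, but the argument is the standard one from Pérez-García et al.
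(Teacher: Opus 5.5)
The paper does not prove this statement; it quotes it from the literature (Fannes--Nachtergaele--Werner, P\'erez-Garc\'ia et al.). Your architecture is the standard proof from there: existence by dimension counting, then the intersection property for an injective overlap block, iterated with $a=m+1-l$, $b=l-1\ge k$, $c=1$. The induction and the statement of your lemma are correct. The gap is the step you flag yourself: you never actually extract the boundary matrix $Z$. The mechanism you propose (remove the middle block via matrix units, then read off the factorization entrywise) cannot work in general when $a<k$. That is exactly the first induction step, $a=1$. Length-$a$ products need not span $\mathcal{M}_D$, or even contain an invertible matrix. For example, for the tensor built from $A_x,A_y,A_z$ alone, $\Span\{A_i\}$ consists of singular skew-symmetric $3\times 3$ matrices, yet length-two products span $\mathcal{M}_3$.

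The missing idea is that removing the middle block yields a \emph{matrix identity}, which you may multiply by arbitrary matrices.
\begin{itemize}
\item Write words of lengths $a,b,c$ as $I,M,J$, with products $u_I,m_M,w_J$, and $v=\sum_J\mathfrak{M}_{a+b}(A,\Psi_J)\otimes\ket{J}=\sum_I\ket{I}\otimes\mathfrak{M}_{b+c}(A,X_I)$.
\item Comparing coefficients gives $\tr(\Psi_J u_I m_M)=\tr(w_J X_I m_M)$ for all $I,M,J$. Since $\{m_M\}$ spans $\mathcal{M}_D$, this means $\Psi_J u_I=w_J X_I$.
\item Right-multiply by $m_M$ and pick coefficients with $\sum_{I,M}c_{IM}u_I m_M=\id$; this is possible because length-$(a+b)$ products span $\mathcal{M}_D$. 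You get $\Psi_J=w_J Z$ with $Z=\sum_{I,M}c_{IM}X_I m_M$.
\item Hence $v=\sum\tr(Z u_I m_M w_J)\ket{IMJ}=\mathfrak{M}_{a+b+c}(A,Z)$.
\end{itemize}
This uses only $b\ge k$, so your hypothesis $a\ge1$ is unnecessary.

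Finally, the degenerate case in your existence argument cannot be ``handled separately''. If $\dim\mathcal{H}=1$, injectivity forces $D=1$ and $\mathfrak{S}_l(A)=\mathcal{H}^{\otimes l}$, so no parent Hamiltonian in the paper's sense (which requires a proper subspace) exists. The statement tacitly needs $\dim\mathcal{H}\ge 2$, which is all your counting uses.
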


In particular, the ground space of the (open boundary) Hamiltonian is $D^2$-dimensional for all system sizes. In the following, we investigate an MPS that is injective with injectivity length $k=2$. \Cref{thm:parent_hamiltonian_uniqe_gs} guarantees that the ground space of its $l=3$-body parent Hamiltonian on $n$ particles is $\mathfrak{S}_n(A)$. In this paper, we consider the $l=2$-body parent Hamiltonian instead: the range of the Hamiltonian terms is the same as the injectivity length, $k=l$, and thus \Cref{thm:parent_hamiltonian_uniqe_gs} does not apply. We obtain our main result: the ground space of this parent Hamiltonian on $n$ particles grows with the system size, as it is of dimension $(n+1)^2$. 

\section{The $\SO(3)$ model}\label{sec:so_3}

Let a basis of $\so(3) \simeq \su(2)$ be $\{x,y,z\}$, with commutation relations
\begin{equation*}
    [x,y] = z, \quad  [y,z]=x, \quad [z,x] = y. 
\end{equation*}
The adjoint actions of $x,y,z$ can be represented by the following matrices relative to the basis $\{x,y,z\}$:
\begin{equation*}
    A_x = \begin{pmatrix}
         0 & 0 & 0 \\ 
         0 & 0 & -1 \\
         0 & 1 & 0
    \end{pmatrix}, \quad 
    A_y = \begin{pmatrix}
         0 & 0 & 1 \\ 
         0 & 0 & 0 \\
         -1 & 0 & 0
    \end{pmatrix}, \quad 
    A_z = \begin{pmatrix}
         0 & -1 & 0 \\ 
         1 & 0 & 0 \\
         0 & 0 & 0
    \end{pmatrix}. 
\end{equation*}

Consider the 4-dimensional complex Hilbert space $\mathcal{H}$ with orthonormal basis $B = \{x,y,z,e\}$, and the MPS tensor 
\begin{equation}\label{eq:MPS_tensor}
A = \sum_{i\in B} A_i \otimes \ket{i}\in \mathcal{M}_3 \otimes \mathcal{H},  
\end{equation}
where $A_e = \id$ is the identity matrix.
It is straightforward to check that this MPS is injective with injectivity length 2. Moreover, if we set 
\begin{equation*}
\psi(g) = \sum_{i,j\in\{x,y,z\}} \ad(g)_{i,j} \ket{i}\bra{j} \in \mathcal{B}(\mathcal{H}),
\qquad g\in \so(3),
\end{equation*}
then
\begin{align*}
    (\id_{\mathcal{M}_3} \otimes \psi(g))(A) &= \sum_{j\in B} A_j \otimes \psi(g) \ket{j} \\
    &= \sum_{i,j\in B} \mathrm{ad}(g)_{i,j} A_j \otimes \ket{i} \\
    &= - \sum_{i,j\in B} \mathrm{ad}(g)_{j,i} A_j \otimes \ket{i} \\
    &= - \sum_{i\in B} [A_g, A_i] \otimes \ket{i} \\
    &= -(\mathrm{ad}(A_g) \otimes \id_{\mathcal{H}})(A),
\end{align*}
for every $g\in \so(3)$. Graphically,
\begin{equation}
\label{eq:so3-symmetry}
    \begin{tikzpicture}[baseline=-1mm, xscale = 0.7, yscale = 0.5, font=\footnotesize]
        \draw (-1, 0) -- (1,0);
        \draw (0,0) -- (0,2);
        \node[tensor, label=right:$\psi(g)$] at (0,1) {};
        \node[tensor] at (0,0) {};
    \end{tikzpicture} \, = \;\,
    \begin{tikzpicture}[baseline=-1mm, xscale = 0.7, yscale = 0.5, font=\footnotesize]
        \draw (-1, 0) -- (2,0);
        \draw (0,0) -- (0,1);
        \node[tensor, label=below:$A_g$] at (1,0) {};
        \node[tensor] at (0,0) {};
    \end{tikzpicture} \; - \;
    \begin{tikzpicture}[baseline=-1mm, xscale = 0.7, yscale = 0.5, font=\footnotesize]
        \draw (-1, 0) -- (2,0);
        \draw (1,0) -- (1,1);
        \node[tensor, label=below:$A_g$] at (0,0) {};
        \node[tensor] at (1,0) {};
    \end{tikzpicture}\,. 
\end{equation}
This means that $\mathfrak{S}_n$ is symmetric under the action of $\so(3)$, for any $n$. In fact, if we let
\begin{equation*}
\psi^{(n)}(g) = \sum_{i=0}^{n-1} \id^{\otimes i} \otimes \psi(g) \otimes \id^{\otimes (n-i-1)}, 
\end{equation*}
then we have
\begin{equation*}
    \psi^{(n)}(g) \mathfrak{M}_n(A,X) = \sum_{i=1}^n \sum_{k\in \{1, \dots, d\}^n} \tr(X A_{k_1} \cdots [A_{k_i}, A_g] \cdots A_{k_n} )\, \ket{k_1 \cdots k_n} = \mathfrak{M}_n(A,[A_g , X]),
\end{equation*}
or graphically,
\begin{equation*}
    \sum_{i=1}^n  \ 
    \begin{tikzpicture}[baseline=-1mm, xscale = 0.7, yscale = 0.5, font=\footnotesize]
        \draw (-2, 0) rectangle (4,-1.1);
        \node[tensor, label=below:$X$] at (-1,0) {};            
        \foreach \x in {0,1, 3}{
            \draw (\x,0) --++ (0,2);
            \node[tensor] at (\x,0) {};            
        }
        \node[tensor,label=right:$\psi(g)_i$] at (1,1) {};            
        \node[fill=white] at (2,0) {$\dots$};
    \end{tikzpicture}\  = \;
    \begin{tikzpicture}[baseline=-1mm, xscale = 0.7, yscale = 0.5, font=\footnotesize]
        \draw (-2.5, 0) rectangle (4,-1.1);
        \node[tensor, label=below:$A_gX - XA_g$] at (-1,0) {};            
        \foreach \x in {0,1, 3}{
            \draw (\x,0) --++ (0,1);
            \node[tensor] at (\x,0) {};            
        }
        \node[fill=white] at (2,0) {$\dots$};
    \end{tikzpicture}\  . 
\end{equation*}
This is equivalent to the MPS being symmetric under the action of $\SO(3)$: 
\begin{equation*}
    \begin{tikzpicture}[baseline=-1mm, xscale = 0.9, yscale = 0.4, font=\footnotesize]
        \draw (-2, 0) rectangle (4,-1.375);
        \node[tensor, label=below:$X$] at (-1,0) {};            
        \foreach \x in {0,1, 3}{
            \draw (\x,0) --++ (0,2);
            \node[tensor] at (\x,0) {};            
            \node[tensor,label={[label distance=-1mm]right:$\Psi(a)$}] at (\x,1) {};            
        }
        \node[fill=white] at (2,0) {$\dots$};
    \end{tikzpicture}\  = \;
    \begin{tikzpicture}[baseline=-1mm, xscale = 0.7, yscale = 0.5, font=\footnotesize]
        \draw (-2.5, 0) rectangle (4,-1.1);
        \node[tensor, label=below:$\Phi(a)X\Phi(a)^{-1}$] at (-1,0) {};            
        \foreach \x in {0,1, 3}{
            \draw (\x,0) --++ (0,1);
            \node[tensor] at (\x,0) {};            
        }
        \node[fill=white] at (2,0) {$\dots$};
    \end{tikzpicture}\  ,
\end{equation*}
for all $a\in \SO(3)$, where $\Psi(\exp g) = \exp\psi(g)$ for $g\in \so(3)$, and $\Phi(\exp g) = \exp A_g$ is the defining representation.

The MPS is injective after blocking 2 sites, and thus its 3-body parent Hamiltonian has only the MPS as its ground space. Here we investigate a shorter-range Hamiltonian: As $\dim \mathcal{H}^{\otimes 2} = 4^2>3^2 \geq \dim\mathfrak{S}_2$, the MPS has a 2-body parent Hamiltonian. Let $H$ denote this parent Hamiltonian, and let $G_n$ be its ground space. As $\mathfrak{S}_2$ is symmetric under $\so(3)$ (and thus under $\SO(3)$ as well), the Hamiltonian $H$ and the ground space $G_n$ are symmetric as well. Our main result is the characterization of the ground space $G_n$ of this Hamiltonian:
\begin{theorem}\label{thm:main}
    The ground space $G_n$ contains each odd-dimensional irreducible representation of $\so(3)$ with dimension $\leq 2n+1$ exactly once, and in particular, its dimension is $\dim(G_n) =(n+1)^2$.
\end{theorem}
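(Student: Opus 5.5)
The plan is to identify $G_n$ with the dual of the degree-$n$ component of an explicit quadratic algebra, namely a central quotient of the homogenized universal enveloping algebra of $\so(3)$. Its graded pieces can then be counted and decomposed using PBW.

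\emph{Step 1: the relation space.} By injectivity length $2$, $X\mapsto \mathfrak{M}_2(A,X)$ is injective, so $\dim\mathfrak{S}_2=9$ and $R:=\mathfrak{S}_2^\perp$ has dimension $7$. All $A_i$ are real, so $R$ is the space of $v=\sum v_{ij}\ket{ij}$ with $\sum_{ij} v_{ij}A_iA_j=0$. I will exhibit $7$ linearly independent such vectors:
\begin{itemize}
\item[(i)] $\ket{ek}-\ket{ke}$ for $k\in\{x,y,z\}$, since $A_e$ is the identity;
\item[(ii)] $\ket{xy}-\ket{yx}-\tfrac12(\ket{ze}+\ket{ez})$ and its cyclic versions, since $[A_x,A_y]=A_z$;
\item[(iii)] $\ket{xx}+\ket{yy}+\ket{zz}+2\ket{ee}$, since $A_x^2+A_y^2+A_z^2=-2\,\id$.
\end{itemize}
These are independent, so they span $R$. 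As a check on equivariance, under the $\so(3)$ action $R$ is spin $1\oplus1\oplus0$, which matches $(0\oplus1)^{\otimes2}$ minus $\mathfrak{S}_2\simeq \mathcal{M}_3\simeq 0\oplus1\oplus2$.

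\emph{Step 2: dualize.} By the Proposition on the ground space, $G_n=\bigcap_i \mathcal{H}^{\otimes i}\otimes R^\perp\otimes\mathcal{H}^{\otimes(n-2-i)}$. This is the annihilator of $I_n:=\sum_i \mathcal{H}^{\otimes i}\otimes R\otimes \mathcal{H}^{\otimes (n-2-i)}$, which is the degree-$n$ part of the two-sided ideal generated by $R$ in the tensor algebra $T(\mathcal{H})$. Hence $G_n\cong (B_n)^*$, where $B=T(\mathcal{H})/(R)$. This identification is $\so(3)$-equivariant: the pairing is invariant because $\psi(g)$ is antisymmetric and real. Since $\so(3)$ irreps are self-dual, $G_n\cong B_n$ as representations.

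\emph{Step 3: identify and count $B$.} By Step 1, $B$ is generated by $e,x,y,z$ subject to the relations
\begin{equation*}
e \text{ central},\qquad [x,y]=ze,\quad [y,z]=xe,\quad [z,x]=ye,\qquad x^2+y^2+z^2=-2e^2 .
\end{equation*}
The first two families of relations define the homogenized enveloping algebra $\tilde U=\tilde U(\so(3))$, with $e$ central. I will argue that $\tilde U$ has PBW bases: it is the Rees algebra of $U(\so(3))$ with respect to the PBW filtration. Consequently $\tilde U_n\cong \mathrm{Sym}^n(\mathbb{C}e\oplus\so(3))$ as $\so(3)$-modules, and $\tilde U$ is a domain. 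The element $c=x^2+y^2+z^2+2e^2$ is central and nonzero, so multiplication by $c$ is injective. Therefore $B_n\cong \tilde U_n/c\,\tilde U_{n-2}$, which gives
\begin{equation*}
\dim B_n=\binom{n+3}{3}-\binom{n+1}{3}=(n+1)^2 .
\end{equation*}
For the irrep content, pass to the associated graded: as modules, $\tilde U_n\cong\bigoplus_{k\le n}\mathrm{Sym}^k(\so(3))$, obtained by taking the $e^{n-k}$ component. Since $\mathrm{Sym}^k(\so(3))=\bigoplus_{j\le k,\ j\equiv k \ (2)} V_j$, quotienting by $c\,\tilde U_{n-2}$ removes exactly the modules indexed by $k\le n-2$. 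What remains is $\mathrm{Sym}^n\oplus\mathrm{Sym}^{n-1}$, which decomposes as $V_0\oplus V_1\oplus\cdots\oplus V_n$, each spin occurring once. These are exactly the odd-dimensional irreps of dimension $\le 2n+1$.

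\emph{Main obstacle.} The delicate points are verifying that the listed relations are exactly $R$, with the correct constants for the Casimir, and justifying PBW together with the regularity of $c$ for the homogenized algebra. For the latter I would use the Rees-algebra description, which reduces the question to $\mathrm{gr}\,\tilde U=\mathbb{C}[e,x,y,z]$ being a domain. In that polynomial ring the leading form of $c$ is nonzero, so $c$ is a non-zero-divisor and the Hilbert series computation goes through equivariantly.
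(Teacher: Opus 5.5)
Your proposal is correct and follows essentially the same route as the paper: you identify $G_n$ with the degree-$n$ part of $T(\mathcal{H})$ modulo the ideal generated by the seven relations, and you use PBW to identify the quotient by the commutator relations with polynomials of degree $\le n$ in $x,y,z$ (your homogenized enveloping algebra). You then factor out the Casimir relation to obtain $\binom{n+3}{3}-\binom{n+1}{3}=(n+1)^2$ and the decomposition $V_0\oplus\cdots\oplus V_n$. Your Rees-algebra argument that $c$ is a non-zero-divisor and your character count make explicit two points the paper leaves implicit, and your dualization via the invariant bilinear pairing replaces the paper's direct use of $G_n\cong\mathcal{H}^{\otimes n}/G_n^\perp$, but the structure of the argument is the same.
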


\begin{proof}
By definition, the ground space is
\begin{equation*}
    G_n = \bigcap_{i=0}^{n-2} \mathcal{H}^{\otimes i} \otimes \mathfrak{S}_2 \otimes \mathcal{H}^{\otimes (n-i-2)}.
\end{equation*}
Let $\mathfrak{S}_2^\perp$ be the orthocomplement of $\mathfrak{S}_2$, and $G_n^\perp$ the orthocomplement of $G_n$. Then
\begin{equation*}
    G_n^\perp =  \sum_{i=0}^{n-2} \mathcal{H}^{\otimes i} \otimes \mathfrak{S}_2^\perp \otimes \mathcal{H}^{\otimes (n-i-2)} .
\end{equation*}
Notice now that 
\begin{equation*}
    \dim G_n = \dim \bigl( \mathcal{H}^{\otimes n} / G_n^\perp \bigr),
\end{equation*}
and thus to calculate $\dim G_n$, it is enough to construct a basis of $\mathcal{H}^{\otimes n} / G_n^\perp$ instead of constructing a basis of $G_n$.  As $\mathfrak{S}_2^\perp$ is also $\so(3)$-invariant, $G_n \cong \mathcal{H}^{\otimes n} / G_n^\perp$ not only as vector spaces, but also as $\so(3)$-modules. 
Hence, investigating $\mathcal{H}^{\otimes n} / G_n^\perp$ will tell us how to decompose the ground space into a direct sum of irreducible representations of $\so(3)$.

Let us construct now $\mathfrak{S}_2^\perp$. For that, let us recall the definition of $\mathfrak{S}_2$:
\begin{equation}\label{eq:MPS_obc_space_2}
    \mathfrak{S}_2 = \Bigl\{\sum_{g,h\in B} \tr\left( X A_g A_h \right) \ket{gh} \,\Big|\, X\in \mathcal{M}_D\Bigr\},
\end{equation}
where $D=3$ in the case of $\so(3)$.
By definition,
\begin{equation}\label{eq:quadr_rel}
    \sum_{g,h\in B} \lambda_{gh} \ket{gh} \in \mathfrak{S}_2^\perp 
    \quad \Leftrightarrow \quad 
    \tr\left( X \sum_{g,h\in B} \bar{\lambda}_{gh} A_g A_h \right) =0, \ \forall X\in\mathcal{M}_D \quad \Leftrightarrow \quad
    \sum_{g,h\in B} \bar{\lambda}_{gh} A_g A_h = 0.
\end{equation}
As mentioned above, the MPS is injective with injectivity length 2, or in other words, $\dim \mathfrak{S}_2 = 9$. From this, it is immediate that $\dim \mathfrak{S}_2^\perp = 7$. In what follows, we construct an explicit basis for $\mathfrak{S}_2^\perp$. 
Due to the commutation relations and the fact that $A_e = \id$, we have
\begin{equation}\label{eq:comm_rel}
    A_g A_h - A_h A_g - A_{[g,h]} A_e =0,
\end{equation}
for all $g,h\in\so(3)$. Thus, we can immediately find the following 6 linearly independent vectors in $\mathfrak{S}_2^\perp$:
\begin{align*}
    v_1 & =\ket{ex} - \ket{xe}, \\
    v_2 & = \ket{ey} - \ket{ye}, \\
    v_3 &= \ket{ez} - \ket{ze}, \\
    v_4 &= \ket{xy} - \ket{yx} - \ket{ze}, \\
    v_5 &= \ket{yz} - \ket{zy} - \ket{xe}, \\
    v_6 &= \ket{zx} - \ket{xz} - \ket{ye}.
\end{align*}
The last vector, $v_7$, describes the fact that in the 3D irreducible representation, the Casimir operator is proportional to the identity:
\begin{equation*}
    A_xA_x + A_yA_y + A_zA_z = -2 \id = -2 A_e A_e,
\end{equation*}
and thus
\begin{equation*}
    v_7 = \ket{xx} + \ket{yy} + \ket{zz} + 2 \ket{ee}.
\end{equation*}
We have constructed $\mathfrak{S}_2^\perp$ as $\mathfrak{S}_2^\perp = J_1 \oplus J_2$, where $J_1 = \Span\{v_i\,|\,1\le i\le6\}$ and $J_2 = \mathbb{C} v_7$. Both vector spaces are invariant under the adjoint action of $\so(3)$. Correspondingly, $G_n^\perp$ can be decomposed as $ G_n^\perp = I_1 + I_2$, where
\begin{equation}\label{eq:i1i2}
  I_1 = \sum_{i=0}^{n-2} \mathcal{H}^{\otimes i} \otimes J_1 \otimes \mathcal{H}^{\otimes (n-i-2)} \quad \text{and} \quad 
  I_2 = \sum_{i=0}^{n-2} \mathcal{H}^{\otimes i} \otimes J_2 \otimes \mathcal{H}^{\otimes (n-i-2)}.
\end{equation}
Both $I_1$ and $I_2$ are invariant under the action of $\so(3)$. We construct $\mathcal{H}^{\otimes n}/G_n^\perp$ in two steps: first we factor out $I_1$, then $I_1 + I_2$. 

Let us write $\mathcal{H} = \mathbb{C} e \oplus \Span\{x,y,z\}$. This decomposition defines a grading and thus a filtration on $\mathcal{H}^{\otimes n}$:  $\mathcal{H}_{0} \subseteq \mathcal{H}_1 \subseteq \cdots \subseteq \mathcal{H}_n = \mathcal{H}^{\otimes n}$, where $\mathcal{H}_i$ means that $\Span\{x,y,z\}$ appears at most $i$ times in the tensor product.  This filtration is compatible with the action of $\so(3)$ on $\mathcal{H}^{\otimes n}$ (i.e., each $\mathcal{H}_i$ is closed under the action). The spaces $\mathcal{H}_i$ form a filtration of the (complexified) tensor algebra of $\so(3)$.

Let us consider now $\mathcal{H}^{\otimes n}/I_1$. As $I_1$ is $\so(3)$-invariant, this space is a representation of $\so(3)$. The space $\mathcal{H}^{\otimes n}/I_1$ also inherits the filtration of $\mathcal{H}^{\otimes n}$, and the filtration is again invariant under $\so(3)$.  We notice that the result is a filtration of the universal enveloping algebra $U(\so(3))$ of $\so(3)$. 
By the Poincaré–Birkhoff–Witt theorem, the canonical symmetrization map $\sigma \colon S(\so(3)) \to U(\so(3))$, defined on monomials by
$$\sigma(y_1 \cdots y_k) = \frac{1}{k!} \sum_{\pi \in S_k} y_{\pi(1)} \cdots y_{\pi(k)},$$
provides a vector space isomorphism between the symmetric algebra $S(\so(3))$ and $U(\so(3))$. Moreover, because $\sigma$ intertwines the derivation action of $\so(3)$ on $S(\so(3))$ with the adjoint action on $U(\so(3))$, $\sigma$ is an isomorphism of $\so(3)$-modules. Explicitly, the following elements form a basis for $\mathcal{H}^{\otimes n}/I_1$:
\begin{align*}
    \mathrm{Sym}\Bigl(\ket{\underbrace{x\cdots x}_{i} \ \underbrace{y \cdots y}_j \ \underbrace{z\cdots z}_k \ \underbrace{e\cdots e}_{n-i-j-k}}\Bigr) + I_1 
\end{align*}
for all $i,j,k$ such that $i+j+k\leq n$, where $\mathrm{Sym}$ denotes the symmetrization map
\begin{equation}\label{eq:sym-map}
\mathrm{Sym}\bigl(\ket{y_1 \cdots y_n}\bigr) = \frac{1}{n!} \sum_{\pi \in S_n} \ket{y_{\pi(1)} \cdots y_{\pi(n)}}.
\end{equation}
Note that we also have another basis for $\mathcal{H}^{\otimes n}/I_1$ given by the ordered monomials
\begin{align*}
    a_{ijk} = \ket{\underbrace{x\cdots x}_{i} \ \underbrace{y \cdots y}_j \ \underbrace{z\cdots z}_k \ \underbrace{e\cdots e}_{n-i-j-k}} + I_1,
\end{align*}
because the relations $I_1$ impose the commutativity of $x,y,z,e$ up to terms of lower degree.
Hence, we can identify $\mathcal{H}^{\otimes n}/I_1$ as an $\so(3)$-module with the vector space of three-variable polynomials of degree $\leq n$, via the isomorphism $a_{ijk} \mapsto x^i y^j z^k$. 
The dimension of this space is $\binom{n+3} 3$.

Let us now factor out $I_2 + I_1$ from this space. As $I_2$ is $\so(3)$-invariant, the result is a filtered vector space that is compatible with the $\so(3)$ action. Under the isomorphism defined as above, $I_1 + I_2$ is the space of polynomials of at most degree $n$ that have the form $p(x,y,z) = (x^2 +y^2 + z^2+2) q(x,y,z)$, where $\deg q\le n-2$. The resulting space $F_n$ is isomorphic to the space of all three-variable \emph{harmonic} polynomials of degree at most $n$.  We have a filtration $F_0\subseteq F_1 \subseteq \dots \subseteq F_n = \mathcal{H}^{\otimes n}/(I_1 + I_2)$, where $\dim F_n = (n+1)^2$. Note that $F_{n}/F_{n-1}$ is an irreducible representation of $\so(3)$, isomorphic to the space of three-variable harmonic polynomials of degree $n$,  with dimension $2n+1$.
\end{proof}

\begin{remark} We can construct a basis for this space explicitly: replace $zz$ with $-2ee-xx-yy$ in $a_{ijk}$. We obtain the representatives
\begin{equation}\label{eq:bijcij}
    b_{ij} = \ket{\underbrace{x\cdots x}_{i} \ \underbrace{y \cdots y}_j \ \underbrace{e\cdots e}_{n-i-j}} + I_1 + I_2 \quad \text{and} \quad c_{ij}=\ket{\underbrace{x\cdots x}_{i} \ \underbrace{y \cdots y}_j \ z \ \underbrace{e\cdots e}_{n-i-j-1}} + I_1 +I_2.
\end{equation}
There are $\binom{n+2}{2} + \binom{n+1}{2} = (n+1)^2$ such vectors, and they are linearly independent.
\end{remark}

\section{The general model}\label{sec:general}

In this section, we generalize the above construction to arbitrary compact simple Lie algebras and analyze the ground-space structure and degeneracy of the resulting Hamiltonian. We obtain a concrete formula for the ground-space degeneracy for the models based on the adjoint representation.

Let $\mathfrak{g}$ be a compact simple (real) Lie algebra and let $\mathcal{H}$ be, as a vector space, the complexification of $\mathfrak{g} \oplus \mathbb{R}e$, where $e$ is a symbol such that $e \notin \mathfrak g$: fixing a basis $B$ of $\mathfrak{g}$,
\begin{equation*}
    \mathcal H = \bigg\{ \sum_{g\in B}\lambda_g \cdot g + \mu \cdot e \ | \ \lambda\in\mathbb{C}^{|B|}, \mu \in \mathbb{C} \bigg\}.
\end{equation*}
Let $\kappa\colon \mathfrak{g} \times \mathfrak{g} \to \mathbb{C}$ be the Killing form of $\mathfrak{g}$. Recall that $\kappa$ is a negative-definite bilinear form on the real vector space $\mathfrak{g}$ that is invariant, i.e., for all $g,h,k\in \mathfrak{g}$,
\begin{equation*}
    \kappa([g,h],k) = \kappa(g,[h,k]).
\end{equation*}
Using $\kappa$, we can define a scalar product on $\mathcal{H}$: if $B$ is an orthonormal basis of $\mathfrak{g}$ with respect to $-\kappa$, we define the scalar product by letting $B\cup e$ be an ONB of $\mathcal{H}$. With this scalar product, $\mathcal{H}$ is a Hilbert space.  The Lie algebra $\mathfrak{g}$ acts on this Hilbert space via the adjoint representation $\psi\colon \mathfrak{g}\to \End(\mathcal{H})$ defined by 
\begin{equation*}
    \psi(g)(\lambda h + \nu e) := \lambda [g,h].
\end{equation*}
Note that the matrix $\psi(g)$ is anti-Hermitian for all $g\in \mathfrak{g}$:
\begin{align*}
    \langle \lambda h + c\,|\, \psi(g) (\mu k + d) \rangle &= \langle \lambda h + c \,|\, \mu [g,k] \rangle \\
    &= - \bar \lambda \mu \cdot \kappa( h, [g,k]) = - \bar \lambda \mu \cdot \kappa( [h,g], k) \\
    &= - \langle\psi(g)(\lambda h +c)\,|\, \mu k + d \rangle .
\end{align*}

Let $V$ be a complex vector space, $\phi\colon \mathfrak{g}\to \End(V)$ be a representation of $\mathfrak{g}$, and $B_{\mathfrak{g}}\subseteq \mathfrak{g}$ be an orthonormal basis (with respect to the negative Killing form) of $\mathfrak{g}$. The set $B = B_{\mathfrak{g}}\cup \{e\}$ then forms an orthonormal basis of $\mathcal{H}$. Let us set the MPS tensor $A \in \End(V) \otimes \mathcal{H}$ as 
\begin{equation}\label{eq:general_A}
    A = \id_V \otimes e + \sum_{g\in B_{\mathfrak{g}}} \phi(g) \otimes g .     
\end{equation}
Let $\psi^{(n)}\colon \mathfrak{g} \to \End(\mathcal{H}^{\otimes n})$ denote the $n$-fold tensor product of the representation $\psi$ of $\mathfrak{g}$, that is, 
\begin{equation*}
    \psi^{(n)}\colon g \mapsto \sum_{i=0}^{n-1} \id_{\mathcal{H}}^{\otimes i} \otimes \psi(g) \otimes \id_{\mathcal{H}}^{\otimes (n-i-1)}.
\end{equation*}
Then the MPS defined by the MPS tensor $A$ is symmetric under the action of $\mathfrak{g}$:
\begin{proposition}
    Let the MPS tensor $A$ be as in \cref{eq:general_A}. The MPS space $\mathfrak{S}_n(A)\subset \mathcal{H}^{\otimes n}$ is invariant under $\mathfrak{g}$ via the representation $\psi^{(n)}$: for all $h\in\mathfrak{g}$, we have 
    \begin{equation*}\label{eq:GS_covaiance}
        \psi^{(n)}(h) \cdot\mathfrak{M}_n(A,X) = \mathfrak{M}_n(A, [\phi(h),X]).
    \end{equation*}
\end{proposition}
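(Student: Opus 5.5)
The proof mirrors the $\so(3)$ computation of \cref{eq:so3-symmetry}: first establish a single-site covariance identity for the tensor $A$, and then sum it over the sites, where the sum telescopes inside the trace.

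\textbf{Single-site identity.} We claim that
\begin{equation*}
(\id_{\End(V)}\otimes\psi(h))(A) = -\sum_{g\in B} [\phi(h),A_g]\otimes g ,
\end{equation*}
where $A_g=\phi(g)$ for $g\in B_{\mathfrak g}$ and $A_e=\id_V$. Since $\psi(h)e=0$, the left-hand side equals $\sum_{k\in B_{\mathfrak g}}\phi(k)\otimes[h,k]$. Expand $[h,k]=\sum_{g}\langle g|[h,k]\rangle\, g$ in the orthonormal basis $B_{\mathfrak g}$. Using $\langle g|[h,k]\rangle=-\kappa(g,[h,k])$ and the invariance of $\kappa$, we have $\langle g|[h,k]\rangle=-\langle k|[g,h]\rangle$. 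Hence the left-hand side is
\begin{equation*}
-\sum_g \phi\Bigl(\sum_k \langle k|[g,h]\rangle k\Bigr)\otimes g=-\sum_g\phi([g,h])\otimes g .
\end{equation*}
Because $\phi$ is a representation, $\phi([g,h])=[\phi(g),\phi(h)]=-[\phi(h),A_g]$, which gives the sign stated above. The $e$-component is consistent with this, since $[\phi(h),\id_V]=0$. Equivalently, $\sum_g A_g\otimes\psi(h)g=\sum_g[A_g,\phi(h)]\otimes g$. The only place needing care is the sign bookkeeping here: it requires the reality of the structure constants in the basis $B_{\mathfrak g}$ and the invariance of $\kappa$. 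A sign error would only replace $[\phi(h),X]$ by its negative, and I would fix the convention by comparing with the explicit $\so(3)$ case.

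\textbf{Summing over sites.} Apply $\psi^{(n)}(h)=\sum_i \id^{\otimes i}\otimes\psi(h)\otimes\id^{\otimes(n-i-1)}$ to
\begin{equation*}
\mathfrak M_n(A,X)=\sum_{k}\tr(XA_{k_1}\cdots A_{k_n})\ket{k_1\cdots k_n}.
\end{equation*}
The $i$-th term acts only on the $i$-th physical index, so by linearity and the single-site identity, the $i$-th summand becomes
\begin{equation*}
\sum_k \tr\bigl(XA_{k_1}\cdots[A_{k_i},\phi(h)]\cdots A_{k_n}\bigr)\ket{k_1\cdots k_n}.
\end{equation*}
Summing over $i$ telescopes, because $\sum_i A_{k_1}\cdots[A_{k_i},\phi(h)]\cdots A_{k_n}=A_{k_1}\cdots A_{k_n}\phi(h)-\phi(h)A_{k_1}\cdots A_{k_n}$. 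Cyclicity of the trace then gives
\begin{equation*}
\tr\bigl(XA_{k_1}\cdots A_{k_n}\phi(h)\bigr)-\tr\bigl(X\phi(h)A_{k_1}\cdots A_{k_n}\bigr)=\tr\bigl([\phi(h),X]A_{k_1}\cdots A_{k_n}\bigr).
\end{equation*}
This is exactly $\mathfrak M_n(A,[\phi(h),X])$, which proves the claimed covariance. Invariance of $\mathfrak S_n(A)$ under $\psi^{(n)}(h)$ follows immediately, since $[\phi(h),X]\in\End(V)$.
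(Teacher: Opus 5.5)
Your route is the same as the paper's: you prove the single-site identity $\sum_{g\in B}A_g\otimes\psi(h)g=\sum_{g\in B}[A_g,\phi(h)]\otimes g$ (the paper's \eqref{eq:GS_covariance_1}) and then telescope the site sum inside the trace. That final identity, the telescoping, and the cyclicity step giving $\mathfrak{M}_n(A,[\phi(h),X])$ are all correct.

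Your derivation of the single-site identity, however, contains two sign slips that happen to cancel.

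\textbf{First slip.} Invariance of $\kappa$ gives
$\langle g|[h,k]\rangle=-\kappa(g,[h,k])=-\kappa([g,h],k)=+\langle k|[g,h]\rangle$, not $-\langle k|[g,h]\rangle$. A check in $\so(3)$: both $\langle z|[x,y]\rangle=\langle z|z\rangle$ and $\langle y|[z,x]\rangle=\langle y|y\rangle$ are positive. So your displayed line should read $\sum_g\phi([g,h])\otimes g$, without the minus sign.

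\textbf{Second slip.} From your displayed $-\sum_g\phi([g,h])\otimes g$ together with $\phi([g,h])=-[\phi(h),A_g]$, one would obtain $+\sum_g[\phi(h),A_g]\otimes g$. So ``which gives the sign stated above'' does not follow from what precedes it.

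With the corrected sign, $\sum_g\phi([g,h])\otimes g=\sum_g[A_g,\phi(h)]\otimes g$ follows immediately. The proposition asserts exactly the sign of $[\phi(h),X]$, so fix this step in the argument itself rather than deferring it to a comparison with the $\so(3)$ case. Once fixed, your proof coincides with the paper's.
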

\begin{proof}
    By orthonormality of the basis $B$, we can write $k = \sum_{g\in B} \langle g|k\rangle \cdot g$ for any $k\in\mathfrak{g}$, thus  
    \begin{equation*}
        \sum_{g\in B} g \cdot  \langle \psi(h) (g)|k \rangle = \sum_{g\in B} g \cdot  \scalprod{[h,g]}{k} = - \sum_{g\in B} g \cdot  \scalprod{[g,h]}{k} = - \sum_{g\in B} g \cdot  \scalprod{g}{[h,k]}  = [k,h].
    \end{equation*}
    That is,
    \begin{equation*}
        \sum_{g\in B} g \otimes \psi(h) (g) = \sum_{g\in B} [g,h] \otimes g.
    \end{equation*}
    Hence, as $\psi(h)(e) = 0$, the MPS tensor $A$ satisfies 
    \begin{equation}\label{eq:GS_covariance_1}
        (\id\otimes \psi(h))(A) = \sum_{g\in B} \phi(g) \otimes \psi(h) (g) = \sum_{g\in B} [\phi(g),\phi(h)] \otimes g,
    \end{equation}
    where we set $\phi(e) = \mathbbm{1}$. We can thus write 
    \begin{equation*}
        \begin{aligned}
            \psi^{(n)}(h) \cdot\mathfrak{M}_n(A,X) = &\sum_j \sum_{i_1 \dots i_n} \tr\big( X \phi(g_{i_1}) \dots \phi(g_{i_n}) \big) \cdot g_{i_1}\otimes \dots \otimes \ad_h(g_{i_j})\otimes  \dots \otimes g_{i_n} \\ 
            =&\sum_{i_1 \dots i_n} \sum_j \tr\big( X \phi(g_{i_1}) \dots (\phi(g_{i_j}) \phi(h) - \phi(h) \phi(g_{i_j}) ) \dots \phi(g_{i_n})\big) \cdot g_{i_1}\otimes   \dots \otimes g_{i_n}.
        \end{aligned}
    \end{equation*}
    Notice now that this is a telescopic sum, and thus 
    \begin{equation*}
        \begin{aligned}
            \psi^{(n)}(h) \cdot\mathfrak{M}_n(A,X)  
            &=\sum_{i_1 \dots i_n} -\tr\big( X \phi(h) \phi(g_{i_1}) \dots  \phi(g_{i_n})\big) \cdot g_{i_1}\otimes   \dots \otimes g_{i_n} \\
            &+\sum_{i_1 \dots i_n} \tr\big( X \phi(g_{i_1}) \dots  \phi(g_{i_n}) \phi(h) \big) \cdot g_{i_1}\otimes   \dots \otimes g_{i_n} \\ &=  \mathfrak{M}_n(A,[\phi(h),X]).
        \end{aligned}
    \end{equation*}
\end{proof}

This means that the MPS, and thus its parent Hamiltonian as well, is invariant under the tensor product of the local unitary operators $u_g = \exp \psi(g)$, for all $g\in \mathfrak{g}$:
\begin{equation*}
    u_g^{\otimes n} \cdot \mathfrak{S}_n (A) = \mathfrak{S}_n (A).
\end{equation*}
We can generalize our main result, \Cref{thm:main}, to this scenario and characterize the ground space of the parent Hamiltonian of this MPS:  
\begin{theorem}
Let an MPS tensor $A$ be as in \cref{eq:general_A}, where $\mathfrak{g}$ is a simple compact Lie algebra  
and $\phi$ is the adjoint representation of $\mathfrak{g}$. Then the ground space $G_n$ of its $2$-local Hamiltonian on $n$ particles 
is isomorphic to the space of harmonic polynomials of degree $\le n$ in $d=\dim \mathfrak{g}$ variables,
and has dimension $\frac{d+2n-1}{d-1} \binom{d+n-2}{d-2}$.
\end{theorem}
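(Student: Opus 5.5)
The plan is to follow the proof of \Cref{thm:main} step by step. The one new ingredient is the identification of $\mathfrak{S}_2^\perp$.

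Here $D=d=\dim\mathfrak g$ and $A_g=\phi(g)=\ad(g)$, with $A_e=\id$. As in \cref{eq:comm_rel}, the commutation relations give the relations
\[
J_1=\Span\{\ket{eg}-\ket{ge},\ \ket{gh}-\ket{hg}-\ket{[g,h]\,e}\},
\]
of dimension $d+\binom d2$. In the orthonormal basis $B_{\mathfrak g}$, the Casimir of the adjoint representation is $\sum_{g\in B_{\mathfrak g}}\ad(g)^2=-\id$, because $B_{\mathfrak g}$ is orthonormal for $-\kappa$. This gives the relation
\[
v=\sum_{g\in B_{\mathfrak g}}\ket{gg}+\ket{ee},
\]
and we set $J_2=\mathbb C v$.

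The first and central step is to prove $\mathfrak S_2^\perp=J_1\oplus J_2$. By \cref{eq:quadr_rel}, this means that the kernel of $\mu\colon\mathcal H\otimes\mathcal H\to\End(\mathfrak g)$, $\ket{gh}\mapsto A_gA_h$, is exactly $J_1+J_2$.

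Modulo $J_1$, every element of $\mathcal H^{\otimes2}$ is a combination of $\ket{ee}$, the vectors $\ket{ge}$, and symmetric tensors in $S^2\mathfrak g$. We therefore have to show that
\[
\mu\colon \mathbb C\oplus\mathfrak g\oplus S^2\mathfrak g\to\End(\mathfrak g),\qquad
(c,k,s)\mapsto c\,\id+\ad(k)+\tfrac12\textstyle\sum(\ad a\,\ad b+\ad b\,\ad a),
\]
has kernel $\mathbb C v$. The key observation is that, with respect to $-\kappa$, each $\ad(k)$ is antisymmetric, while $\id$ and the anticommutators $\ad a\,\ad b+\ad b\,\ad a$ are symmetric. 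So the $\mathfrak g$-part and the $(\mathbb C\oplus S^2\mathfrak g)$-part land in complementary subspaces. This matters for $\mathfrak{sl}_n$, where $S^2\mathfrak g\supset\mathfrak g$ and a cancellation would otherwise be conceivable. The $\mathfrak g$-part is injective since $\mathfrak g$ is simple and so $\ad$ is faithful.

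For the symmetric part, identify symmetric endomorphisms with $S^2\mathfrak g$ via $\kappa$. Then $\mu|_{S^2\mathfrak g}$ becomes a $\mathfrak g$-equivariant endomorphism $T$ of $S^2\mathfrak g$. I will write $T$ in terms of the split Casimir $\Omega=\sum_{g}\ad g\otimes\ad g$ acting on $S^2\mathfrak g$, using $\kappa(\ad a\,\ad b\,x,y)=\kappa([a,x],[b,y])$ up to sign. Vogel's decomposition $S^2\mathfrak g=\mathbb C\oplus Y_2\oplus Y_2'\oplus Y_2''$ then lets one read off the eigenvalue of $T$ on each isotypic component from Vogel's universal parameters $\alpha,\beta,\gamma$. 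The required facts are that $T$ is nonzero on $Y_2,Y_2',Y_2''$, and that on the trivial component it sends the Casimir element to $-\id$. The latter cancels against $\ket{ee}$, which yields exactly one relation, namely $v$.

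This eigenvalue check is where I expect the main obstacle. If the universal formulas are awkward, the fallback is to verify, via a highest-weight vector in each component, that the corresponding operator in $\End(\mathfrak g)$ does not vanish. Once this step is done, the kernel count gives $\dim\mathfrak S_2=(d+1)^2-\binom d2-d-1$.

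Next comes the PBW step, exactly as in the proof of \Cref{thm:main}. Define $I_1,I_2$ as in \cref{eq:i1i2}. Then $\mathcal H^{\otimes n}/I_1$ is the degree-$\le n$ filtered piece $U_n$ of $U(\mathfrak g_{\mathbb C})$, in which $e$ acts as the unit. Via symmetrization, $U_n$ is isomorphic as a $\mathfrak g$-module to the polynomials of degree $\le n$ in $d$ variables.

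The image of $I_2$ is $\sum_i U_{i}\,q\,U_{n-2-i}$ with $q=\sum g^2+1$. Since $q$ is central, this equals $q\,U_{n-2}$. Because $U$ has no zero divisors and $\mathrm{gr}\,U$ is the polynomial ring, the associated graded of $q\,U_{n-2}$ is $r\cdot\mathrm{Poly}_{\le n-2}$ with $r=\sum x_i^2$. Hence $G_n\cong U_n/qU_{n-2}$ has associated graded $\bigoplus_{k\le n}\mathrm{Poly}_k/r\,\mathrm{Poly}_{k-2}$. Each summand is isomorphic, as a $\mathfrak g$-module, to the harmonic polynomials of degree $k$, because $\mathrm{Poly}_k=\mathcal H_k\oplus r\,\mathrm{Poly}_{k-2}$ and the Laplacian is $\mathfrak g$-invariant. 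By complete reducibility, the filtration splits, so $G_n$ is isomorphic to the harmonic polynomials of degree $\le n$.

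Finally, the dimension follows by telescoping. The harmonic polynomials of degree $k$ have dimension $\binom{d+k-1}{d-1}-\binom{d+k-3}{d-1}$. Summing over $k=0,\dots,n$ gives $\binom{d+n-1}{d-1}+\binom{d+n-2}{d-1}$, which simplifies to $\frac{d+2n-1}{d-1}\binom{d+n-2}{d-2}$.
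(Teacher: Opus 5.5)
Your proposal is correct and is essentially the paper's proof. The paper lower-bounds $\dim\mathfrak{S}_2$ by exhibiting the vectors $w_g$, $w_{g,h}$ and proving their independence, whereas you compute the kernel of $\mu$ directly (the transposed form of the same rank count); both reduce to the injectivity of the split Casimir $\sum_l \ad_l\otimes\ad_l$ on $S^2\mathfrak{g}$ (your $T$ is this operator up to sign), which the paper settles by citing that Vogel's $\alpha,\beta,\gamma,t$ are nonzero. Your PBW, harmonic-polynomial and dimension steps coincide with the paper's, and your associated-graded argument for $U_n/qU_{n-2}$ is a slightly more careful version of its identification ``modulo $\sum x_i^2=-1$''.
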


\begin{proof}
Let us introduce the matrices $A_g=\phi(g)=\ad_g$ for $g\in\mathfrak{g}$ and $A_e=\id_{\mathfrak{g}}$. This allows us to rewrite \cref{eq:general_A} more compactly as
\begin{equation}\label{eq:general_Ae}
    A = \sum_{g\in B} A_g \otimes g .     
\end{equation}
Recall the definition of $\mathfrak{S}_2$ from \eqref{eq:MPS_obc_space_2},
\begin{equation*}
    \mathfrak{S}_2 = \Bigl\{\sum_{g,h\in B} \tr\left( X A_g A_h \right) \ket{gh} \,\Big|\, X\in \mathcal{M}_d\Bigr\}.
\end{equation*}
We note that elements of $\mathfrak{S}_2^\perp$ are in bijection with quadratic relations satisfied by $A_g$ (cf.\ \eqref{eq:quadr_rel}):
\begin{equation}\label{eq:quadr_rel2}
\begin{split}
    \sum_{g,h\in B} \lambda_{gh} \ket{g} \otimes \ket{h} \in \mathfrak{S}_2^\perp \quad &\Leftrightarrow \quad 
    \tr\Bigl( X \sum_{g,h\in B} \bar\lambda_{g,h} A_g A_h \Bigr) =0, \ \forall X\in\mathcal{M}_d \\ 
    &\Leftrightarrow \quad
    \sum_{g,h\in B} \bar{\lambda}_{gh} A_g A_h = 0,
\end{split}
\end{equation}
because the trace form $\tr(XY)$ is non-degenerate for $X,Y\in\mathcal{M}_d$. 

Due to the commutation relations and the fact that $A_e = \id$, we have for $g,h\in\mathfrak{g}$:
\begin{align}\label{eq:comm_rel2}
    A_g A_h - A_h A_g - A_{[g,h]} A_e &= 0, \\ \label{eq:comm_rel3}
    A_g A_e - A_e A_g &= 0.
\end{align}
Another relation comes from the quadratic Casimir element: since $\mathfrak{g}$ is simple, its adjoint representation is irreducible. Hence, by Schur's Lemma, the Casimir element acts in it as a scalar multiple of $\id$, that is,
\begin{equation}\label{eq:casimir_repr}
    \sum_{g\in B_\mathfrak{g}} A_g A_g = \lambda\cdot  \id.
\end{equation}
To determine the value of $\lambda$, recall that $B_{\mathfrak g} = B\cap \mathfrak{g}$ is an orthonormal basis for $\mathfrak{g}$ relative to the negative Killing form; hence
\begin{equation}\label{eq:killing}
\tr(A_g A_h) = -\delta_{g,h}, \qquad g,h\in B_{\mathfrak{g}}.
\end{equation}
Taking the trace in \cref{eq:casimir_repr}, we find that $\lambda = -1$ due to the normalization~\eqref{eq:killing}.  Then we have
\begin{equation}\label{eq:quadr_rel3}
    \sum_{g\in B_{\mathfrak{g}}} A_g A_g = -\id = -A_e A_e,
\end{equation}
or equivalently,
\begin{equation}\label{eq:quadr_rel3e}
    \sum_{g\in B} A_g A_g = 0.
\end{equation}

According to \eqref{eq:quadr_rel2}, the relations \eqref{eq:comm_rel2}, \eqref{eq:comm_rel3}, \eqref{eq:quadr_rel3e} correspond to the following elements of $\mathfrak{S}_2^\perp$, respectively:
\begin{align}\label{eq:v_rel2}
    v_{g,h} &:= \ket{g} \otimes \ket{h} - \ket{h} \otimes \ket{g} - \ket{[g,h]} \otimes \ket{e} \in \mathfrak{S}_2^\perp, \\ \label{eq:v_rel3}
    v_{g} &:= \ket{g} \otimes \ket{e} - \ket{e} \otimes \ket{g} \in \mathfrak{S}_2^\perp, \\ \label{eq:v_quadr}
    v &:= \sum_{g\in B} \ket{g} \otimes \ket{g} \in \mathfrak{S}_2^\perp.
\end{align}
As in the proof of \Cref{thm:main}, let us set 
\begin{equation*}
  J_1 = \Span\{ v_{g,h} \,|\, g,h\in B_{\mathfrak{g}},\,g\ne h \} \oplus \Span\{ v_{g} \,|\, g\in B_{\mathfrak{g}} \}, \quad \text{and} \quad J_2 = \Span\{v\}. 
\end{equation*}

Next, we will prove that $\mathfrak{S}_2^\perp  = J_1 \oplus J_2$.  For that, we will show that $\dim \mathfrak{S}_2 \geq \binom{d}{2} +2d$, and thus $\dim \mathfrak{S}_2^\perp \leq (d+1)^2 - \binom{d}{2} - 2d = \binom{d}{2} + d + 1$. As the dimension of $J_1 \oplus J_2$  is already $\binom{d}{2} + d + 1$, this implies that $\mathfrak{S}_2^\perp  = J_1 \oplus J_2$. To prove that $\dim \mathfrak{S}_2 \geq \binom{d}{2} +2d$, let us define the following $\binom{d}{2} +2d$ vectors in $\mathfrak{S}_2$ for $g,h\in B_{\mathfrak{g}}$:
\begin{align}
    w_g &= \sum_{k,l\in B} \tr (A_g A_k A_l) \ket{kl}, \label{eq:wg} \\ 
    w_{g,h} &= \sum_{k,l\in B} \bigl(\bra{g} A_k A_l \ket{h} + \bra{h} A_k A_l \ket{g}\bigr) \ket{kl}. \label{eq:wgh}
\end{align}
Note that for $v_{g,h}$ we consider all unordered pairs of distinct $g,h\in B_{\mathfrak{g}}$ (hence there are $\binom{d}{2}$ such vectors), while 
for $w_{g,h}$ we consider all unordered pairs of $g,h\in B_{\mathfrak{g}}$ where $g=h$ is allowed (and there are $\binom{d}{2}+d$ such vectors). We will show now that these vectors are all linearly independent. 

First of all, by \cref{eq:killing} and $A_g^T = -A_g$, we have:
\begin{align*}
    w_g &=  \sum_{k\in B_{\mathfrak{g}}} \tr (A_g A_k) \ket{ke} +  \sum_{l\in B_{\mathfrak{g}}} \tr (A_g A_l) \ket{el} + \frac{1}{2} \sum_{k,l\in B_{\mathfrak{g}}} \tr \bigg( A_g A_k A_l + (A_g A_k A_l)^T \bigg) \ket{kl} \\
     &= -\ket{ge} -  \ket{eg} -  \frac{1}{2}\sum_{k,l\in B_{\mathfrak{g}}} \tr \bigl(A_k [A_g ,A_l] \bigr) \ket{kl} \\
     &= -\ket{ge} -  \ket{eg} + \frac{1}{2}\sum_{l\in B_{\mathfrak{g}}} \ket{[g,l]} \otimes \ket{l}.
\end{align*}
A similar calculation applies to $w_{g,h}$:
\begin{align*}
w_{g,h} &= 2\delta_{g,h} \ket{ee} + \sum_{k\in B_{\mathfrak{g}}} \bigl(\bra{g} A_k \ket{h} + \bra{h} A_k \ket{g}\bigr) \ket{ke} 
+ \sum_{l\in B_{\mathfrak{g}}} \bigl(\bra{g} A_l \ket{h} + \bra{h} A_l \ket{g}\bigr) \ket{el} \\
&+\sum_{k,l\in B_{\mathfrak{g}}} \bigl(\bra{g} A_k A_l \ket{h} + \bra{h} A_k A_l \ket{g}\bigr) \ket{kl} \\
&= 2\delta_{g,h} \ket{ee} + \sum_{k,l\in B_{\mathfrak{g}}} \bigl(\bra{g} A_k A_l \ket{h} + \bra{h} A_k A_l \ket{g}\bigr) \ket{kl},
\end{align*}
where we used that 
$$\bra{g} A_k \ket{h} = \scalprod{g}{[k,h]} = \scalprod{[h,g]}{k},$$
which implies $\bra{g} A_k \ket{h} + \bra{h} A_k \ket{g} = 0$.
Next, using $A_k = \ad_k$ for $k\in\mathfrak{g}$ and $\bra{g} A_k A_l \ket{h} = -\scalprod{[k,g]}{[l,h]}$, we rewrite \cref{eq:wgh} as
$$w_{g,h} = 2\delta_{g,h} \ket{ee} - \sum_{k,l \in B_\mathfrak{g}} \bigl( \scalprod{[k,g]}{[l,h]} + \scalprod{[k,h]}{[l,g]} \bigr) \ket{kl}.$$
Recognizing $\scalprod{[k,g]}{[l,h]} = -\scalprod{k}{[g,[h,l]]}$, and expanding in the orthonormal basis $B_\mathfrak{g}$, we obtain
\begin{align*}
w_{g,h} - 2\delta_{g,h} \ket{ee} 
&= \sum_{l\in B_{\mathfrak{g}}} \Bigl( \ket{[g,[h,l]]} \otimes \ket{l}+ \ket{[h,[g,l]]} \otimes \ket{l} \Bigr) \\
&= -\sum_{l\in B_{\mathfrak{g}}} \Bigl( \ket{[h,l]} \otimes \ket{[g,l]} + \ket{[g,l]} \otimes \ket{[h,l]} \Bigr).
\end{align*}
For the last equality, we used that the split Casimir element
$$C = \sum_{l\in B_{\mathfrak{g}}} \ket{l} \otimes \ket{l} \in \mathfrak{g} \otimes \mathfrak{g}$$
is invariant under the adjoint action of $\mathfrak{g}$.
Therefore,
$$w_{g,h} = 2\delta_{g,h} \ket{ee} 
- \sum_{l \in B_\mathfrak{g}} \bigl(\ad_l \otimes \ad_l\bigr) \bigl( \ket{g} \otimes \ket{h} + \ket{h} \otimes \ket{g} \bigr).$$

Clearly, the set $\{w_g\}_{g\in B_{\mathfrak{g}}}$ is linearly independent, and as $w_{g,h} - 2\delta_{g,h} \ket{ee} \in \mathfrak{g} \otimes \mathfrak{g}$, it is also independent from  $\{w_{g,h}\}_{g,h\in B_{\mathfrak{g}}}$. Thus, we are only left to prove that $w_{g,h}$ are linearly independent, for which it suffices to show that the module homomorphism 
$$F\colon S^2 \mathfrak{g} \to S^2 \mathfrak{g}, \qquad F = \sum_{l\in B_{\mathfrak{g}}} \ad_{l} \otimes \ad_{l},$$ 
is injective. 

Our key observation is that $F$ is the adjoint action of the split Casimir element $C$ on $S^2 \mathfrak{g}$,
whose eigenvalues are determined by Vogel's universal parameters \cite{vogel1999} (see also \cite{isaev2026vogel}). Vogel demonstrated that for any simple finite-dimensional Lie algebra, the symmetric square of the adjoint representation decomposes into at most four sub-representations:
\begin{equation}
S^2\mathfrak{g} = \mathbb{C} \oplus Y_2(\alpha) \oplus Y_2(\beta) \oplus Y_2(\gamma),
\end{equation}
where $\mathbb{C}$ represents the trivial one-dimensional representation. By Schur's lemma, $F$ acts as a scalar on each summand. Up to a nonzero normalization factor, the eigenvalues of 
$F$  
on the respective components $\mathbb{C}, Y_2(\alpha), Y_2(\beta), Y_2(\gamma)$ are given by $-2t, -\alpha, -\beta, -\gamma$, where 
\begin{equation}
t = \alpha + \beta + \gamma > 0.
\end{equation}
The projectively defined numbers $(\alpha, \beta, \gamma)$, known as the Vogel parameters, together with $t$, are non-zero for all simple Lie algebras $\mathfrak{g}$, which proves that $F$ is injective. Below we list the known values of the Vogel parameters \cite{vogel1999}: 
\begin{enumerate}
\item \textbf{Type $A_r$ ($\mathfrak{sl}_{r+1}$):}
The parameters can be scaled to $\{-2, 2, r+1\}$. Here, $t = r+1$. The eigenvalues are $-2(r+1), 2, -2, -(r+1)$. Since $r \ge 1$, none of these values is zero.
\item \textbf{Types $B_r$ and $D_r$ ($\mathfrak{so}_{N}$):}
The parameters are $\{-2, 4, N-4\}$. For $\mathfrak{so}_5$ ($B_2$) and higher, $N \ge 5$, meaning $N-4 \neq 0$. Thus, all parameters are non-zero.
\item \textbf{Type $C_r$ ($\mathfrak{sp}_{2r}$):}
The parameters are $\{-2, 1, r+2\}$. For $r \ge 1$, all elements are strictly non-zero.
\item \textbf{Exceptional Lie Algebras ($\mathfrak{g}_2, \mathfrak{f}_4, \mathfrak{e}_6, \mathfrak{e}_7, \mathfrak{e}_8$):}
The parameters are given by specific non-zero rational fractions. For instance, for $\mathfrak{g}_2$, the parameters are $\{-2, 10/3, 8/3\}$. None of the parameters vanish for any exceptional algebra.
\end{enumerate}

We have thus seen that $\mathfrak{S}_2^\perp = G_2^\perp = J_1 \oplus J_2$, and thus, in general, $G_n^\perp$ can be decomposed as $ G_n^\perp = I_1 + I_2$, where $I_1$ and $I_2$ are defined as in \eqref{eq:i1i2},
\begin{equation*}
  I_1 = \sum_{i=0}^{n-2} \mathcal{H}^{\otimes i} \otimes J_1 \otimes \mathcal{H}^{\otimes (n-i-2)} \quad \text{and} \quad 
  I_2 = \sum_{i=0}^{n-2} \mathcal{H}^{\otimes i} \otimes J_2 \otimes \mathcal{H}^{\otimes (n-i-2)}.
\end{equation*}
By the Poincar\'e--Birkhoff--Witt theorem (see, e.g., \cite{dixmier1996enveloping}), the space $\mathcal{H}^{\otimes n}/I_1$ can be identified as a representation of $\mathfrak{g}$ with the space of polynomials from $S(\mathfrak{g}) \cong \mathbb{C}[x_1,\dots,x_d]$ of degree $\le n$, where $B_{\mathfrak{g}}=\{x_1,\dots,x_d\}$.

It follows that we can identify the ground space $G_n$ with the space of polynomials in $x_1,\dots,x_d$ of degree $\le n$ (which is linearly spanned by \emph{homogeneous} polynomials of degree $\le n$), modulo the relation $\sum x_i^2 = -1$. Since every homogeneous polynomial of degree $m$ can be written uniquely in the form
\[ \sum_{0\le k \le m/2} h_k(x_1,\dots,x_d) \cdot \Bigl(\sum_{i=1}^d x_i^2\Bigr)^k, \]
where $h_k$ is a homogeneous harmonic polynomial of degree $m-2k$,
we see that $G_n$ is isomorphic to the space of $d$-variable harmonic polynomials of degree $\le n$. The dimension of $d$-variable homogeneous harmonic polynomials of degree $m$ is $\binom{d + m -1}{d-1} - \binom{d+m -3}{d-1}$.
Hence, the degeneracy of the ground space is 
    \begin{align*}
        \dim G_n &= \sum_{m=0}^{n} \Bigl[\binom{d + m -1}{d-1} - \binom{d+m -3}{d-1} \Bigr]\\
        &= \binom{d + n -1}{d-1} + \binom{d+n -2}{d-1}  \\
        &= \frac{d+2n-1}{d-1} \binom{d+n-2}{d-2}.
    \end{align*}
This completes the proof of the theorem.
\end{proof}

\section{Periodic boundary conditions in the $\SO(3)$ model} \label{sec:pbc}

In this section, we analyze the ground-space degeneracy of the $\SO(3)$ model with periodic boundary conditions. The $\SO(3)$ model with periodic boundary conditions is given by the Hamiltonian 
\begin{equation}
    H^\mathrm{PBC} = H + h_{n,1}
\end{equation}
where $H$ is given in \cref{parentH} and $h_{n,1}$ is the local Hamiltonian term acting on the last and first particle. As $h_{n,1}$ is positive semidefinite, this means that the ground space $G_n^\mathrm{PBC}$ of $H^\mathrm{PBC}$ satisfies $G_n^\mathrm{PBC} \subseteq G_n$. As above, we can find the dimension of the space by $\dim G_n^\mathrm{PBC} = \dim \big(\mathcal{H}^{\otimes n} / (G_n^\mathrm{PBC})^\perp\big)$. Here, $(G_n^\mathrm{PBC})^\perp = I_1 + I_2 + I_3$, where $I_1$ and $I_2$ are given in \cref{eq:i1i2}, and $I_3$ is composed of states that are locally orthogonal to the ground space on the last and first particle: 
\begin{align*}
    I_3 = \Span \Bigl\{& \ket{e a_1 \cdots a_{n-2} u} - \ket{ua_1 \cdots a_{n-2} e} \,\big|\, a_1, \dots, a_{n-2} \in \mathcal{H}, u\in \{x,y,z\} \Bigr\} \\
    + \,\Span \Bigl\{& \ket{y a_1 \cdots a_{n-2} x} - \ket{xa_1 \cdots a_{n-2} y } - \ket{e a_1 \cdots a_{n-2} z}, \\
&  \ket{z a_1 \cdots a_{n-2} y} - \ket{ya_1 \cdots a_{n-2} z } - \ket{e a_1 \cdots a_{n-2} x}, \\
&   \ket{x a_1 \cdots a_{n-2} z} - \ket{za_1 \cdots a_{n-2} x } - \ket{e a_1 \cdots a_{n-2} y}, \\
&   \ket{x a_1 \cdots a_{n-2} x} + \ket{ya_1 \cdots a_{n-2} y } + \ket{z a_1 \cdots a_{n-2} z} + 2\ket{e a_1 \cdots a_{n-2} e} \,\big|\, a_1, \dots, a_{n-2} \in \mathcal{H} \Bigr\}.
\end{align*}
We can thus construct $\mathcal{H}^{\otimes n} / (G_n^\mathrm{PBC})^\perp$ as $\bigl(\mathcal{H}^{\otimes n}/ (I_1 +I_2)\bigr)\big/\bigl((I_3 + I_1 + I_2)/(I_1+I_2)\bigr)$. We have already constructed $\mathcal{H}^{\otimes n}/ (I_1 +I_2)$, and in the following proof we factor out $I_3$ from it.

\begin{theorem}
    \label{thm:pbc-gs-structure}
    The ground-space degeneracy of $H^\mathrm{PBC}$ is $2n+2$, and it consists of the $1$-dimensional trivial representation plus the irreducible representation of dimension $2n+1$.
\end{theorem}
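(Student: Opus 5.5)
Under the identification from the proof of \Cref{thm:main}, $\mathcal{H}^{\otimes n}/(I_1+I_2)$ is the $\so(3)$-module $F_n$ of harmonic polynomials of degree $\le n$, i.e.\ polynomials in $x,y,z$ modulo $x^2+y^2+z^2+2$, truncated at degree $n$. In this picture a word $\ket{w_1\cdots w_n}$ corresponds to the ordered product $w_1\cdots w_n$ in $U(\so(3))$, with $e=1$, then reduced. My plan is to compute the image of $I_3$ in $F_n$ and to show that the quotient is exactly $\mathbf{1}\oplus V_{2n+1}$.

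The first step is to note that each generator of $I_3$ is a ``cyclic'' version of a $J$-relation. For $a=a_1\cdots a_{n-2}$ with $u\in\{x,y,z\}$, the first family maps to $au-ua$, i.e.\ $[a,u]$ in $U(\so(3))$. The second family maps to elements of the form $yax-xay-az$ and so on. The Casimir family maps to $\sum_u uau+2a$. Since each generator is linear in $a$, and $a$ ranges over all of $\mathcal{H}^{\otimes(n-2)}$, the image of $I_3$ is a subspace of the filtered algebra. I will express it through the operators $\ad_u$ and $L(a)=\sum_u uau+2a$, using $uav-vau=[u,a]v+a[u,v]+\ldots$ computed in $U$. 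The second family then reduces modulo the first: $yax-xay = [y,a]x + a(yx) - [x,a]y - a(xy)$, and $a(yx-xy)=-az$ cancels the extra term. Hence, modulo $\{[a,u]\}$-type terms, the image of $I_3$ is spanned by $\ad_u(a)$-type elements together with $L(a)$. Note also that $L(a)=\sum_u u[a,u]+a\,(\sum_u u^2+2)$, so $L(a)\equiv \sum_u u\,[a,u]$ modulo the Casimir relation.

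So the image of $I_3$ in $F_n$ equals $\Span\{[u,a]\}+\Span\{\sum_u u[a,u]\}$, with $a$ ranging over $F_{n-2}$, which is all polynomials of degree $\le n-2$. The second step is to compute $\Span\{\ad_u F_{n-2}\}$. Since $F_{n-2}=\bigoplus_{m\le n-2}V_{2m+1}$ and $\ad_u$ acts as the $\so(3)$-action, this span is $\bigoplus_{1\le m\le n-2}V_{2m+1}$; only the invariant constants are killed. Next, the elements $\sum_u u[a,u]$ with $a\in V_{2m+1}$ are the image of the equivariant map $V_3\otimes V_{2m+1}\to F_n$ contracted with the adjoint action. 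This lands in degree $m+1$ and is nonzero for $m\ge 1$, as the top-degree harmonic projection of $\sum_u u\,\ad_u a$ shows, since it is $(x\partial\ldots)$-type. I expect it to cover $V_{2m+3}$ or $V_{2m+1}$ modulo the previous span. I will check which by a highest-weight computation, taking $a=(x+iy)^m$. Here $\sum_u u[a,u]$ is essentially the action of the split Casimir on $\so(3)\otimes V_{2m+1}$ projected to multiplication. The eigenvalue depends only on which component is hit, and the product $\so(3)\cdot V_{2m+1}\subset F_{m+1}$ has top-degree component $V_{2m+3}$. A highest-weight vector such as $(x+iy)^{m+1}$ should then appear with nonzero coefficient.

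Collecting these, the image of $I_3$ should be $\bigoplus_{1\le m\le n-1}V_{2m+1}$: the span of $\ad_u a$ gives $m\le n-2$, and the Casimir family contributes $V_{2n-1}$ from $a\in V_{2n-3}$. It cannot contain $V_1$, since the singlet MPS lies in $G_n^{\mathrm{PBC}}$. It also cannot contain $V_{2n+1}$, since $I_3$ has degree $\le n-1$ in the top filtration: the $e$-insertions in every generator, and the top-degree parts, cancel after commutation. This yields the claimed $1+(2n+1)=2n+2$. The main obstacle is the step covering $V_{2n-1}$, namely proving that the Casimir-type elements $L(a)$ really hit the degree-$(n-1)$ harmonic component nontrivially rather than vanishing there. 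I would settle this by an explicit highest-weight calculation with $a=(x+iy)^{n-2}$. As a consistency check, the ferromagnetic state $\ket{(x+iy)}^{\otimes n}$ and the MPS should be verified to be exact PBC ground states, which gives the lower bound $2n+2$ directly.
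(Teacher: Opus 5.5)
Your overall architecture is sound. The first family gives $\bigoplus_{1\le m\le n-2}V_{2m+1}$. Every generator of $I_3$ maps into $F_{n-1}$, so $V_{2n+1}$ survives. The singlet MPS shows that $V_1$ survives. The gap is at the step you flagged as the main obstacle, and the Casimir family cannot close it.

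\textbf{The Casimir family cannot reach $V_{2n-1}$.} The map $a\mapsto\sum_u u[a,u]$ is $\so(3)$-equivariant on the multiplicity-free module $F$, so it preserves each $V_{2m+1}$. Explicitly, with $C=x^2+y^2+z^2$ central:
\begin{itemize}
\item $\sum_u\bigl(u[a,u]+[a,u]u\bigr)=[a,C]=0$;
\item $\sum_u\bigl(u[a,u]-[a,u]u\bigr)=-\sum_u\ad_u^2(a)=m(m+1)a$ for $a\in V_{2m+1}$.
\end{itemize}
Hence $\sum_u u[a,u]=\tfrac{m(m+1)}{2}\,a$. Equivalently, on $S(\so(3))$ one has $\sum_u u\,\ad_u=-\vec r\cdot(\vec r\times\nabla)=0$, so the degree-$(m+1)$ part you hoped for vanishes identically. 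Your proposed test with $a=(x+iy)^m$ therefore returns $\tfrac{m(m+1)}{2}(x+iy)^m$, never $(x+iy)^{m+1}$, which in any case has a different $\ad_z$-weight. So $L(a)$ is redundant with the first family. If the image of $I_3$ really were $\Span\{[u,a]\}+\Span\{\sum_u u[a,u]\}$, it would be only $\bigoplus_{1\le m\le n-2}V_{2m+1}$. Then $V_{2n-1}$ would survive and the degeneracy would be $4n+1$.

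\textbf{Why the second family is not redundant.} In $U(\so(3))$,
\[
yax-xay-az=[y,a]x-[x,a]y+a(yx-xy)-az=[y,a]x-[x,a]y-2az .
\]
The term $a(yx-xy)=-az$ doubles the extra term rather than cancelling it. Moreover, $[y,a]x$ and $az$ reach degree $n-1$, while $\Span\{[b,u]\}\subseteq F_{n-2}$, so these elements do not reduce to the first family.

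\textbf{The fix.} The second family is exactly what kills $V_{2n-1}$. Take $w=x+iy$, so $[w,z]=iw$. Then
\[
(zay-yaz-ax)+i(xaz-zax-ay)=i(waz-zaw)-aw .
\]
At $a=w^{n-2}$ this equals $i[w^{n-1},z]-w^{n-1}=-n\,w^{n-1}$, the highest-weight vector of $V_{2n-1}$. This is the mechanism of the paper's proof. It uses the cross-boundary relation $i\ket{eu}\equiv\ket{uz}-\ket{zu}$, then commutes $z$ through $u^k$ by telescoping. That shows the highest-weight vector $u_k$ vanishes for every $1\le k\le n-1$ at once. With this replacement your argument goes through.
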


\begin{proof}
    Recall from the proof of \Cref{thm:main} (cf.\ \cref{eq:bijcij}), that the open-boundary condition ground space is
    \begin{equation*}
        G_{n}^{\mathrm{OBC}} = \bigoplus_{i=0}^n V_i,
    \end{equation*}
    where $V_i$ are spin-$i$ irreducible representations. As $G_{n}^{\mathrm{PBC}} \subseteq G_{n}^{\mathrm{OBC}}$ and it is also $\SO(3)$-invariant, we have
    \begin{equation*}
        G_{n}^{\mathrm{PBC}} = \bigoplus_{i\in \mathcal{K}} V_i,
    \end{equation*}
    for some subset $\mathcal{K} \subseteq \{0,\dots, n\}$. We will show that $\mathcal{K} = \{0,n\}$. 

    Notice that the highest weight vector of $V_k$ is
    \begin{equation*}
        u_k := \ket{\underbrace{u\cdots u}_k e \cdots e } + I_1 + I_2,
    \end{equation*}
    where $u = x+ i y$.
    We will prove that, for $1 \le k\le n-1$, $u_k$ becomes $0$ after factoring out $I_3$, i.e., that $u_k\in I_1 + I_2 + I_3$. Since each irreducible representation is either entirely contained in $G_{n}^{\mathrm{PBC}}$ or entirely absent from it, this implies that $G_{n}^{\mathrm{PBC}} \subseteq V_0 + V_n$. 

    To show that $u_k \equiv 0 \mod {I_1 + I_2 + I_3}$, we first  use that $i\ket{eu} \equiv \ket{uz}-\ket{zu}$ on the last and first particle, arriving at
    \begin{align*}
        i u_{k} &= i \ket{u\underbrace{u \cdots u}_{k-1} e\cdots ee} \\ 
        &\equiv \ket{z\underbrace{u \cdots u}_{k-1} e\cdots eu} - \ket{u\underbrace{u \cdots u}_{k-1} e\cdots ez} \\
        &\equiv \ket{z\underbrace{u \cdots u}_{k} e\cdots e} - \ket{\underbrace{u \cdots u}_{k} ze\cdots e} .       
    \end{align*}
    Then in the first term, we commute $z$ through all $u$'s. That is, we write the telescopic sum
    \begin{align*}
        i u_{k} \equiv \sum_{m=0}^{k-1} \Bigl(\ket{\underbrace{u \cdots u}_{m} zu \underbrace{u \cdots u}_{k-1-m} e\cdots e} - \ket{\underbrace{u \cdots u}_{m} uz\underbrace{u \cdots u}_{k-1-m} e\cdots e} \Bigr).    
    \end{align*}
    Using the commutation relations $\ket{zu}-\ket{uz} \equiv -i\ket{eu}$, we obtain
    \begin{align*}
        i u_{k} \equiv -i\sum_{m=0}^{k-1} \ket{\underbrace{u \cdots u}_{m} eu \underbrace{u \cdots u}_{k-1-m} e\cdots e}
        \equiv -ik u_k.
    \end{align*}
    Therefore, $u_{k} \equiv 0$, finishing the proof that $G_{n}^{\mathrm{PBC}} \subseteq V_0 + V_n$.

    It remains to show that $V_0\not\equiv 0$ and $V_n\not \equiv 0$. As noted above, it is enough to find a single state from both sectors that is non-zero, or equivalently, that is a ground state of the Hamiltonian $H_n^{\mathrm{PBC}}$. As the Hamiltonian is frustration free and $V_0$ and $V_n$ are in the ground space of $H_n^{\mathrm{OBC}}$, it is enough to show that there are translation invariant states in the two sectors. This is trivial because
    $u_0 = \ket{e\cdots e} \in V_0$  and $u_n = \ket{u\cdots u}\in V_n$.
\end{proof}

\section{Physical structure of the model}\label{sec:physical}

We conclude with a brief discussion of the physical structure of the model.

The physical Hilbert space of the model at each site
decomposes into a direct sum of a spin-$1$ and a spin-$0$ component, which can
be naturally interpreted as either a spin-$1$ particle or a hole
(akin to the Hilbert space of a Bose--Hubbard model made up of spin-$1$ particles).
This is checked most conveniently 
by working with the basis 
\begin{equation*}
\ket{S=1,S_z=0} \equiv \ket{1,0} := \ket{z}, \quad
\ket{S=1,S_z=\pm1} \equiv \ket{1,\pm1}:= \tfrac{1}{\sqrt{2}}(\mp\ket{x}-i\ket{y}), \quad
\ket{S=0} \equiv \ket{0} := \ket{e},
\end{equation*}
where the vectors with $S=1$ correspond to the spin-$1$ particle,  and $S=0$ corresponds to the hole of
the physical Hilbert space. In this basis, the (Hermitian) 
generator $i\psi(g)$ of the  physical  
$\mathfrak{so}(3)$ symmetry of the tensor,
Eq.~\eqref{eq:so3-symmetry}, acts as a spin-$0$ and a spin-$1$ irreducible
representation, with the spin-$1$ representation given by
\begin{equation*}
S_x = \frac{1}{\sqrt{2}}
    \begin{pmatrix} 0 & 1 & 0 \\ 1 & 0 & 1 \\ 0 & 1 & 0 \end{pmatrix}
    ,\quad
S_y = \frac{1}{\sqrt{2}}
    \begin{pmatrix} 0 & -i & 0 \\ i & 0 & -i \\ 0 & i & 0 \end{pmatrix}
    ,\quad
S_z = 
    \begin{pmatrix} 1 & 0 & 0 \\ 0 & 0 & 0 \\ 0 & 0 & -1 \end{pmatrix}
    ,
\end{equation*}
that is, the MPS is invariant under said symmetry action.
The resulting model thus has a local Hilbert space $\mathcal H$, 
which decomposes into a direct sum of a spin-0 and a spin-1 space,
$\mathcal H = \mathcal H_0\oplus \mathcal H_1$.

The  frustration-free and $\mathfrak{so}(3)$-invariant $2$-site parent Hamiltonian $h$ is,
up to a constant, equal to minus the projector onto  $\mathfrak S_2$, $h=\id-\Pi_{\mathfrak S_2}$.
To understand its physical structure, we first observe that the 2-site Hilbert space
$\mathcal H\otimes \mathcal H = (\mathcal H_0\oplus \mathcal H_1)\otimes (\mathcal
H_0\oplus \mathcal H_1)$ decomposes into a
sum of two spin-0, three spin-1, and one spin-2 irreducible representations of $\mathfrak{so}(3)$. Observe that: 
(i) $\mathfrak S_2$  is obtained by placing all boundary
conditions $X$ around two $A$ tensors, Eq.~\eqref{eq:MPS_obc_space_2}, 
(ii) the virtual degrees of freedom of $A$ transform as a spin-$1$ irrep, thus
the space of all boundary conditions $X$ transforms under the adjoint representation
$1\otimes 1=0\oplus 1\oplus 2$ and (e.g.\ by dimension counting) fully spans this space,
and (iii) the two contracted $A$ tensors in \eqref{eq:MPS_obc_space_2} form an
injective and $\mathfrak{so}(3)$-symmetric map from the virtual to the physical
system. It follows that the ground space $\mathfrak S_2$ of $h$ contains
precisely one spin-$0$, one spin-$1$, and one spin-$2$ irreducible
representation each. 

What is the structure of those different irrep spaces within the $2$-site ground space,
and thus of the different terms in the two-body parent Hamiltonian $h$? 
\begin{itemize}
\item
The first component of the ground space is the unique spin-$2$ component of the
$2$-site Hilbert space $\mathcal H\otimes \mathcal H$. It is spanned by all
states $\ket{\vec n}\otimes \ket{\vec n}$, where $\ket{\vec n}\in\mathcal H_1$
lives in the spin-$1$ space and has maximum spin $\vec n\cdot \vec S = \sum
n_\alpha S_\alpha$ in a given direction $\vec n \in \mathbb R^3$ ($|\vec n|=1$).
In particular, this implies that on $N$ sites, both for open and periodic
boundary conditions, the ferromagnetic states $\ket{\vec n}^{\otimes N}$  (which
precisely span the space with maximum total spin $N$) are ground states of the
periodic $N$-site chain.
\item
The spin-$0$ space in $\mathcal H\otimes \mathcal H$ is two-fold degenerate,
spanned by $\ket{0}\ket{0}=\ket{e}\ket{e}$ and 
\begin{equation}
\label{eq:phys:1x1-joint0}
\tfrac{1}{\sqrt{3}}\bigl(-\ket{1,+1}\ket{1,-1}+\ket{1,0}\ket{1,0}-\ket{1,-1}\ket{1,+1}\bigr)
 = \tfrac{1}{\sqrt{3}}\bigl(\ket{x}\ket{x}+\ket{y}\ket{y}+\ket{z}\ket{z}\bigr)\ .
\end{equation}
It contains the vector
$v_7=\ket{x}\ket{x}+\ket{y}\ket{y}+\ket{z}\ket{z}+2\ket{e}\ket{e}$, which is in
the excited space of the Hamiltonian. The ground space thus contains its
orthocomplement within the joint spin-$0$ space, namely, 
\[
\ket{0}\ket{0} + 
\tfrac{2}{3}\bigl(\ket{1,+1}\ket{1,-1}-\ket{1,0}\ket{1,0}+\ket{1,-1}\ket{1,+1}\bigr)\ .
\]
The action of the Hamiltonian in the spin-$0$ sector thus describes an
interaction where a pair of spin-$1$ particles in a joint singlet (i.e.,
spin-$0$) state,
Eq.~\eqref{eq:phys:1x1-joint0},
is created or annihilated, i.e., a pairing interaction. 
\item 
Finally, it follows that the spin-$1$ component in the ground space can be
obtained by the orthocomplement of $v_1,\dots,v_6$ in the $3$-fold degenerate
spin-$1$ component of two adjacent sites (obtained as $\mathcal H_1\otimes
\mathcal H_0$, $\mathcal H_0\otimes \mathcal H_1$,  i.e., one hole and one
spin-$1$ particle, together with the spin-$1$ fusion channel of $\mathcal
H_1\otimes \mathcal H_1$,  i.e., of two particles). The action of the Hamiltonian in the
spin-$1$ sector thus 
describes a spin-$1$ particle that hops to the adjacent site, which is in the
vacuum, while at the same time, a second spin-$1$ particle  ---
which forms a joint spin-$1$ state with the other particle ---
can be coherently created.
\end{itemize}

By construction, the periodic boundary condition Hamiltonian has the original
MPS with periodic boundary condition, i.e. $X = I$, as a ground state. In addition, as we have seen above, it has the
ferromagnetic states $\ket{\vec n}^{\otimes N}$, with $\ket{\vec n}$ a state with
maximal magnetization in direction $\vec n$, as ground states. These states span the
space with total spin $N$. Together, these states thus fully characterize the
ground space for periodic boundary conditions, cf.\ Theorem~\ref{thm:pbc-gs-structure}.

Given that the model has a ferromagnetic ground state, it is natural to expect
that it has gapless spin wave excitations above the ferromagnetic ground state.
This can be indeed verified through the spin wave ansatz
\begin{equation}
    \label{eq:spinwaveansatz}
\ket{\psi(k)}  = \frac{1}{\sqrt{N}}\sum_{\ell=1}^N e^{ik\ell} S_\ell^-\ket{1,+1}^{\otimes N}\ ,
\end{equation}
with $S^-=\tfrac{1}{\sqrt{2}}(S_x-iS_y)$, for which one obtains the variational
energy  $\bra{\psi(k)} H \ket{\psi(k)} = \tfrac45(1-\cos(k))$, where here $H$
is the PBC Hamiltonian. Since for $k\ne0$, the variational energy provides an
upper bound to the energy of the lowest-lying eigenstate with momentum $k$, this
rigorously establishes that the model is gapless, with an (at least)
quadratically closing gap. The existence of a spin wave band with a quadratic
dispersion can also be confirmed numerically through exact diagonalization, with
the numerically obtained energy roughly matching that of the spin wave ansatz
\eqref{eq:spinwaveansatz}.

\section{Conclusion and outlook}\label{sec:conclusion}

In this paper, we have introduced a systematic construction for certain matrix product states that possess continuous symmetries. Unlike the AKLT model, the parent Hamiltonians of these states have polynomially growing ground-space degeneracies in both the open- and periodic-boundary cases. Due to the symmetry of these models, the ground-space degeneracy can be understood as the presence of non-trivial irreducible representations in the ground space. We have shown how the ground space for the construction with $\so(3)$ symmetry decomposes into irreducible representations, both in the open and periodic boundary cases, and we have generalized these results to other simple Lie algebras.

For the $\so(3)$-symmetric model with open boundary conditions on $n$ sites, we have found that the irreducible representations that appear in the decomposition of the ground space are those that have odd dimension at most $2n+1$ (that is, integer spin $\le n$). Moreover, each of these irreducible representations appears with multiplicity one, resulting in a ground-space degeneracy that grows quadratically with system size. We have also found that for periodic boundary conditions, only the $1$- and $(2n+1)$-dimensional irreducible representations are present in the ground space, and thus the ground-space degeneracy grows linearly in the system size.

Our results generalize to other simple Lie algebras $\mathfrak{g}$. We have obtained an explicit formula for the ground-space degeneracy in the open boundary case; it is a polynomial of degree $d-1$ in the system size where $d=\dim\mathfrak{g}$.
To summarize, we provided a broad family of examples of injective MPS with parent Hamiltonians that have short interaction range and a large ground-space degeneracy, complementing the results of \cite{schuch2025} showing that generically these types of Hamiltonians have unique ground states. 

We expect that the ground-space degeneracy of the generalized model with periodic boundary conditions can be determined by similar methods. It would also be interesting to generalize the construction to two dimensions and analyze the ground-space structure and spectral properties of the resulting models.

\section*{Acknowledgements}

We would like to thank Eugene Dumitrescu and Michael Ragone for insightful discussions.
BNB was supported by the U.S. Department of Energy, Advanced Scientific
Computing Research, under contract number DE-SC0025384.  NS and AM
acknowledge funding by the Austrian Science Fund FWF (Grant
Nos.~\href{https://doi.org/10.55776/COE1}{10.55776/COE1},
\href{https://doi.org/10.55776/F71}{10.55776/F71},
\href{https://doi.org/10.55776/P36305}{10.55776/P36305}), by the European Union
-- NextGenerationEU, and by the European Union’s Horizon 2020 research and
innovation programme through Grant No.~863476 (ERC-CoG \mbox{SEQUAM}).

\printbibliography

\end{document}